\newtheorem{theorem}{Theorem}[section]
\newtheorem{lemma}[theorem]{Lemma}
\newtheorem{proposition}[theorem]{Proposition}
\newenvironment{proof}[1][Proof]{\noindent \textbf{#1.} }{\  \rule{0.5em}{0.5em}}
\newtheorem{remark}{Remark}[section]
\newtheorem{assumption}{Assumption}
\begin{document}
\title{Decomposing Identification Gains and Evaluating Instrument Identification Power for Partially Identified Average Treatment Effects}
\author[1]{
    Lina Zhang\thanks{Corresponding author. Roetersstraat 11, 1018 WB Amsterdam,
The Netherlands. Email: \texttt{l.zhang5@uva.nl}. This paper was carried out when Zhang was at Monash University. Acknowledgments: Frazier wishes to acknowledge funding from the Australia Research Council (DE200102101), and the Australian center for Excellence in Mathematics and Statistics (ACEMS). Poskitt and Zhao wish to acknowledge funding from the Australia Research Council (DP210103094).}}
\affil[1]{Amsterdam School of Economics, University of Amsterdam, The Netherlands}
\author[2]{David T. Frazier}
\author[2]{D.S. Poskitt}
\author[2]{Xueyan Zhao}\affil[2]{Department of Econometrics and Business Statistics, Monash University, Australia}
\maketitle

\begin{abstract}
This paper examines the identification power of instrumental variables (IVs) for average treatment effect (ATE) in partially identified models. We decompose the ATE identification gains into components of contributions driven by IV relevancy, IV strength, direction and degree of treatment endogeneity, and matching via exogenous covariates. Our decomposition is demonstrated with graphical illustrations, simulation studies and an empirical example of childbearing and women's labour supply. Our analysis offers insights for understanding the complex role of IVs in ATE identification and for selecting IVs in practical policy designs. Simulations also suggest potential uses of our analysis for detecting irrelevant instruments. \\~\\~\\
\noindent\textbf{JEL Codes}: C14, C31, C35, C36\\
\noindent\textbf{Keywords}: Heterogeneous Treatment Effect; Binary Dependent Variables; Propensity Score; Asymmetric Endogeneity; Instrument Identification Power.
\end{abstract}
\clearpage
\section{Introduction}
The average treatment effect (ATE) is an important policy relevant measure in causal analysis \citep*{heckman2006understanding,imbens2004nonparametric}, but its identification and estimation in empirical research has long been contentious when the treatment is endogenous and instrumental variables (IVs) are used as the identification strategy. This paper takes an empirical causal analyst’s perspective and examines and illustrates the roles of IVs and other factors in the identification and estimation of the ATE within a partially identified modelling framework. Although there have been significant theoretical developments in the econometric literature on the understanding of conventional IV estimands and treatment effect bounds under broader assumptions \citep*[see e.g.][]{manski1990nonparametric,balke1997bounds,heckman1999local,heckman2001instrumental,heckman2005structural,manski2000monotone,chernozhukov2007estimation,chesher2010instrumental}, the exact role of IVs and the associated estimation of various causal effects have remained not well understood in applied economic studies. By synthesising the existing econometric literature on IVs and ATE bounds, and with the help of some novel analyses, in this paper we aim to facilitate a better understanding of the complex role of IVs in ATE identification in practical applications.

In empirical causal analyses, it is common for researchers to estimate the causal effect of an endogenous treatment by a conventional IV estimator as the “identification strategy” \citep[see, e.g.,][]{nunn2014us}. In models with homogeneous treatment responses, using any one of the valid IVs can lead to point identification of the ATE, and conventional IV estimators correctly estimate the ATE. However, as shown by \citet*{heckman2006understanding}, in heterogeneous treatment effect models, different IVs identify different local treatment effects \citep{imbens1994identification}, and conventional IV estimates are no longer robust to the choice of alternative IVs. In fact,  as explained by \citet*{heckman2006understanding}, the classical IV estimand is not ATE but may be a quantity with no easily interpretable meaning regardless of IV strength.

Evidence against homogeneous treatment effects abounds, with estimates based on different sets of valid IVs often producing different treatment effect estimates in practice \citep{carneiro2003understanding,basu2007use,angrist2010extrapolate}.  Once heterogeneous treatment response is allowed, the ATE is often not point identified. Thus, the analysis in \citet*{heckman2006understanding} presents a convincing argument that if the ATE is of primary interest, the identified set for the ATE based on a partially identified model should be preferred to an analysis based on a conventional IV estimation approach. However, there have only been limited applications of partially identified ATE analysis in empirical studies. Consequently, understanding the role of IVs in this setting is important for promoting heterogeneous treatment models for empirical causal analysis.

In heterogeneous treatment effects models, \citet{heckman2001instrumental} demonstrate that the property of ``identification at infinity'' \citep[hereafter IAI,][]{heckman1990varieties}, namely, the availability of IVs that produce propensity scores of zero and one in the limit, leads to point identification of the ATE. However, this condition is rarely satisfied in practice, especially when IVs have limited variation. When IAI fails, inference on the ATE can still be carried out by constructing an identified set for the ATE. Therefore, in a partial identification framework, the impact of IVs can be studied by their influence on the ATE bounds.

We focus on models with binary outcome and binary endogenous treatment. Such  models have been widely used in empirical studies since the pioneering work of \citet{heckman1978dummy}. See  \citet{neal1997effects}, \citet{thornton2008demand} and \citet{ashraf2014household} for examples that use fully parametric bivariate probit models, and see \citet{aakvik2005estimating}, \citet{bhattacharya2008treatment,bhattacharya2012treatment} and \citet{kreider2012identifying,kreider2016identifying} for examples that rely on nonparametric models. The role played by IVs has been a topic of discussion in the literatures, including the notion of ``\emph{identification by functional form}'' \citep*[see e.g.][]{maddala1986limited, wilde2000identification, freedman2010endogeneity, mourifie2014note, han2017identification, li2019bivariate}.\footnote{See \citet{li2019bivariate} for a summary on the topic, including a sufficient condition regarding the support of exogenous regressors for models such as the bivariate probit to achieve ATE point identification without any IVs. However, once the restrictive parametric assumptions fail to hold, IVs become necessary.}

The important role of IVs has been noted for partially identified ATE in heterogeneous treatment effect models \citep[see e.g.][]{manski1990nonparametric, heckman2001instrumental,chesher2005nonparametric,chesher2010instrumental, shaikh2011partial,li2019bivariate}. \citet{heckman2001instrumental} show that for their model with threshold crossing for the treatment, it is the width between the minimum and maximum propensity scores reached by the available instruments that determines the ATE bound width. \citet{chesher2010instrumental} also points out that the support and the strength of the IVs are important in determining the ATE bounds, whilst \citet{li2018bounds} present some simulation results on bound width and IV strength. However, the mechanism through which the IV strength translates to identification gains in partially identified models and whether/how other factors also play a part have not been laid bare in a manner that can be readily understood by practitioners.

In this paper, we examine the role of IVs, as well as their interplay with the degree of endogeneity and exogenous covariates, in the identification of the ATE. Following the partial identification literature,\footnote{For example, see Kitagawa (2009) and Swanson et al (2018) among others.} we use the ATE sign identification and the reduction in the size of the ATE identified set as a measure for \emph{identification gains}. Focusing on the bivariate joint threshold crossing model and the ATE bounds proposed by \citet{shaikh2011partial} (henceforth referred to as the SV model and SV bounds), we disentangle the various factors determining the ATE identification, and provide useful insights for the practitioners into the different sources and natures of identification gains.

To this end, our first contribution is to highlight and demonstrate for the case of SV bounds how IVs achieve identification gains via their attained minimum and maximum conditional propensity scores. The implication to empirical researchers is that, unlike in homogeneous treatment effect models, in heterogeneous models, omitting relevant IVs, or misclassifying continuous IVs as binary ones, could result in a loss of identification power and wider ATE bounds.

Second, we show that, unlike the case of \citet{heckman2001instrumental}, the SV bounds for a binary outcome are additionally impacted by the sign and degree of treatment endogeneity. Interestingly, we find that the endogeneity drives the SV bounds asymmetrically. Specifically, the same propensity score extremes could offer much greater identification power when the ATE and the endogeneity direction are of the opposite signs, relative to the case when they are of the same sign. Thus, it is the interactions of IVs with other features of the model that determine the level of the IV identification power for the ATE. Similar asymmetric influence of treatment endogeneity is also noted in nonlinear parametric models \citep[see, e.g.,][]{freedman2010endogeneity,frazier}.

Our third contribution is to propose a novel decomposition of identification gains into components driven by: (i) the existence of valid IVs that identifies the sign of the ATE; (ii) IV strength that determines the size of the outer set of the ATE identified set; and (iii) the variation of the exogenous covariates that further refines the outer set. The last component is the key driver for achieving the ATE sharp identified set in \citet{mourifie2015sharp} and the ATE point identification in \citet{vytlacil2007dummy}. Based on the decomposition, we further propose a measure for IV identification power (hereafter $IIP$), which captures the critical fact that the IV identification information pertaining to the ATE varies with the endogeneity degree.

The decomposition and $IIP$ analysis allow us to shine a light on the internal workings of the ATE partial identification mechanism and thereby characterize the structure of identification gains. This analysis allows us to offer useful insights on the extent of identification gains achievable by each individual factor, and the contribution of each IV when multiple IVs are used. Our analysis includes graphical illustrations of bound reduction anatomy, as well as Monte Carlo results of finite sample performance. We also apply our methods to an empirical study of women’s labour force participation \citep{angrist1998children}. Two IVs are used in the study: a dummy for the first two children being same-sex siblings (``\emph{Samesex}''), and a dummy for the second birth being a twin (``\emph{Twins}''). Our analysis shows that the identification power of \emph{Twins} is about 1.4 times the identification power of \emph{Samesex} if they are used separately. If both IVs are used, the total IV identification power is only marginally larger than that if only using the \emph{Twins}.

Together with the theoretical decomposition analysis, we believe this paper offers useful insights for empirical causal researchers who wish to understand the complex impacts of IVs on ATE partial identification. Furthermore, we offer some practical examples where our analysis can be used for policy relevant instrument design and selection. Our paper also sheds light on instrument relevancy. Our $IIP$ measure is related to existing approaches in the generalized methods of moment (GMM) literature that seek to determine instrument ``relevancy''. The ability of our approach to rank sets of IVs by their identification gains, in conjunction with our Monte Carlo simulation results lead us to document, we believe for the first time, an important feature of bivariate triangular models: while in the population, adding irrelevant IVs cannot tighten the ATE bounds, in finite-samples, using such IVs could lead to a loss in IV identification power and wider bounds, when the variation of the covariates is small. We liken this phenomena to the well-known problem of irrelevant moment conditions in GMM \citep[see][among others]{breusch1999redundancy,hall2003consistent,hall2005generalized,hall2007information} and leave a more rigorous study of this topic for future research.

The rest of this paper is organized as follows. In Section \ref{sectionFactor} we present the SV model setup and the SV bounds. In Section \ref{sectionBounds} we establish three key factors that affect the ATE bounds. Section \ref{subsectionIG} introduces our decomposition of identification gains and the index of $IIP$. A comprehensive numerical analysis and graphical presentation are given in Section \ref{sectionNA}. Finite sample evaluation and implications for empirical causal practice are presented in Section \ref{sectionER}, and an empirical example is given in Section \ref{emp}. The paper closes in Section \ref{con} with some summary remarks. All proofs are relegated to Appendix.

\section{SV Model Setup and the ATE Bounds}\label{sectionFactor}
Suppose we observe a binary outcome $Y$ and a binary treatment $D$. Consider the joint threshold crossing (JTC) model studied in \citet{shaikh2011partial}:
\begin{equation}\begin{aligned}\label{modelSV}
Y&=1[\nu_1(D,X)>\varepsilon_1],\\
D&=1[\nu_2(X,Z)>\varepsilon_2],
\end{aligned}
\end{equation}
where $X$ denotes a vector of exogenous covariates, $Z$ represents a vector of instruments that can be discrete, continuous or mixed, $\nu_1$ and $\nu_2$ are unknown functions, and $\varepsilon_1$ and $\varepsilon_2$ are unobservable error terms. Let $Y_d=1[\nu_1(d,X)>\varepsilon_1]$ denote the potential outcome for $D=d$ with $d=0,1$. The JTC model allows for flexible forms of heterogeneous treatment effects due to the nonseparable error structure \citep{heckman2006understanding}, and is often used in treatment evaluation studies \citep[see, e.g.,][]{bhattacharya2008treatment,bhattacharya2012treatment,kreider2012identifying}.\footnote{\citet{vytlacil2002independence} shows that the threshold crossing condition is equivalent to the monotonicity assumption. In the JTC models, it means that all individuals with the same observable characteristics will respond to the treatment and instrument in the same direction. \citet{bhattacharya2012treatment} demonstrate that the ATE SV bounds under the JTC model \eqref{modelSV} still hold under a rank similarity condition, a weaker property that allows heterogeneity in the sign of the ATE$(x)$.}

We are interested in the most commonly studied treatment effect, the conditional ATE, defined as
$$\text{ATE}(x)=\mathbb{E}[Y_1|X=x]-\mathbb{E}[Y_0|X=x].$$
For notational simplicity, for any generic random variables $A$ and $B$, henceforth we will use $\text{Pr}[A|b]$ to represent $\text{Pr}[A|B=b]$, unless otherwise stated. The support of $A$ is denoted as $\Omega_A$ and the support of $A$ conditional on $B=b$ is given by $\Omega_{A|b}$.

\begin{assumption}\label{SVR}\textbf{\citep{shaikh2011partial}}
\begin{itemize}
  \item[(a)]$(X,Z)$ is independent of $(\varepsilon_1,\varepsilon_2)$.
  \item[(b)]$(\varepsilon_1,\varepsilon_2)'$ has a strictly positive density with respect to the Lebesgue measure on $\mathbb{R}^2$.
  \item[(c)]The support of the distribution of $(X,Z)$, $\Omega_{X,Z}$, is compact.
  \item[(d)]$\nu_1:\Omega_{D,X}\rightarrow\mathbb{R}$, $\nu_2:\Omega_{X,Z}\rightarrow\mathbb{R}$ are continuous in both arguments.
\item[(e)]The distribution of $\nu_2(X,Z)|X$ is non-degenerate.\label{exclusion}
\end{itemize}
\end{assumption}
Assumption \ref{SVR} (a) and (e) ensure that the instrument $Z$ is independent of the error terms and relevant to the treatment. Conditions (b), (c) and (d) are imposed for analytical simplicity. Denote $P=P(X,Z)=\text{Pr}[D=1|X,Z]$ with support $\Omega_P$. In model \eqref{modelSV}, $Z$ affects the outcome $Y$ only through the propensity score $P$, which is called index sufficiency.

Proposition \ref{prop_SV} summarizes the key results in \citet{shaikh2011partial} and is presented for reference. Denote $\underline{p}:=\inf\{p\in\Omega_{P}\}$ and $\overline{p}:=\sup\{p\in\Omega_{P}\}$.
\begin{proposition}\label{prop_SV}\citep{shaikh2011partial}
\begin{itemize}
  \item[(i)]Under Assumption \ref{SVR} (a) and (b), ATE$(x)\in[L^{SV}(x),U^{SV}(x)]$, where
\begin{equation}
\begin{aligned}
L^{SV}(x)&=\sup_{p\in\Omega_{P|x}}\left\{\text{Pr}[Y=1,D=1|x,p]+\sup_{x'\in \textbf{X}_{1+}(x)}\text{Pr}[Y=1,D=0|x',p]\right\}\\
\label{p40}
&~~~~-\inf_{p\in\Omega_{P|x}}\left\{\text{Pr}[Y=1,D=0|x,p]+p\inf_{x'\in\textbf{X}_{0+}(x)}\text{Pr}[Y=1|x',p,D=1]\right\},
\end{aligned}
\end{equation}
\begin{equation}
\begin{aligned}
U^{SV}(x)&=\inf_{p\in\Omega_{P|x}}\left\{\text{Pr}[Y=1,D=1|x,p]+(1-p)\inf_{x'\in\textbf{X}_{1-}(x)}\text{Pr}[Y=1|x',p,D=0]\right\}\\
\label{p41}
&~~~~-\sup_{p\in\Omega_{P|x}}\left\{\text{Pr}[Y=1,D=0|x,p]+\sup_{x'\in\textbf{X}_{0-}(x)}\text{Pr}[Y=1,D=1|x',p]\right\},
\end{aligned}
\end{equation}where $L^{SV}(x)$ and $U^{SV}(x)$ are referred to as SV lower and upper bound, respectively, and $\textbf{X}_{0+}(x)$, $\textbf{X}_{0-}(x)$, $\textbf{X}_{1+}(x)$, and $\textbf{X}_{1-}(x)$ are subsets of $\Omega_{X}$ whose definition can be found in Appendix \ref{app_def}.\footnote{It is understood that the supremum and infimum operators in the bounds are only taken over regions where all conditional probabilities are well defined: $\text{Pr}[Y=y,D=d|x',p]$ and $\text{Pr}[Y=y|x',p,D=d]$ are well defined, if there exists $z'\in\Omega_{Z|x'}$ such that $\text{Pr}[D=1|x',z']=p$. The supremum over an empty set is defined as 0, and the infimum over an empty set is defined as 1. The same definition of the supremum and infimum also applies to the bounds throughout the paper.}
  \item[(ii)]Under Assumption \ref{SVR} (a) and (b), if $\Omega_{X,P}=\Omega_X\times\Omega_P$ holds, the SV bounds can be simplified to
\begin{equation*}
\begin{aligned}
L^{SV}(x)&=\text{Pr}[Y=1,D=1|x,\overline{p}]+\sup_{x'\in \textbf{X}_{1+}(x)}\text{Pr}[Y=1,D=0|x',\overline{p}]\\
&~~~~-\text{Pr}[Y=1,D=0|x,\underline{p}]-\underline{p}\inf_{x'\in\textbf{X}_{0+}(x)}\text{Pr}[Y=1|x',\underline{p},D=1],
\end{aligned}
\end{equation*}
\begin{equation*}
\begin{aligned}
U^{SV}(x)&=\text{Pr}[Y=1,D=1|x,\overline{p}]+(1-\overline{p})\inf_{x'\in\textbf{X}_{1-}(x)}\text{Pr}[Y=1|x',\overline{p},D=0]\\
&~~~~-\text{Pr}[Y=1,D=0|x,\underline{p}]-\sup_{x'\in\textbf{X}_{0-}(x)}\text{Pr}[Y=1,D=1|x',\underline{p}],
\end{aligned}
\end{equation*}
  \item[(iii)]Under Assumption \ref{SVR} (a) to (d), if $\Omega_{X,P}=\Omega_X\times\Omega_P$ holds, then the SV bounds above are sharp.
  \item[(iv)]Under Assumption \ref{SVR} (a), (b) and (e), the sign of the $\text{ATE}(x)$ is identified by the sign of
$\text{Pr}[Y=1|x,p]-\text{Pr}[Y=1|x,p']$, for any $p,p'\in\Omega_P$ such that $p>p'$,
where sgn$[\cdot]$ is the conventional signum function.
\end{itemize}
\end{proposition}There are several implications of Proposition \ref{prop_SV}. First, the SV bounds in \eqref{p40} and \eqref{p41} consist of two layers of intersection evaluations. The first layer is to intersect all possible values of the conditional propensity score $P$ given $X=x$, or equivalently, of the IVs. The second layer of intersections are taken over values of covariates whose variation can compensate (to some degree) for the variation in the treatment variable in the outcome equation. Thus, both the IVs and the covariates contribute to the ATE partial identification using the SV bounds.

In particular, the SV bounds are informative in the sense that the sign of the ATE$(x)$ is recovered, if $Z$ has nonzero prediction power for the treatment, meaning that there exist at least two different values of $P(x,z),P(x,z')\in\Omega_{P|x}$. It indicates that the IVs' contribution to the SV bounds is achieved in two steps: first, using any nonzero IV variation to identify the ATE sign; and second, using the first layer intersections to further shrink the SV bounds. In addition, when the instrument is binary, the ATE$(x)$ sign is identified by the sign of the ``intention-to-treat'' parameter $\text{Pr}[Y=1|x,P(x,z)]-\text{Pr}[Y=1|x,P(x,z')]$.\footnote{\citet{machado2013instrumental} study how the intention-to-treat parameter can be used for identification and inference of the ATE sign under different assumptions, including the JTC model. \citet{ura2018heterogeneous} and \citet{tommasi2020bounding} also find the intention-to-treat parameter identifies the ATE sign in heterogeneous treatment effect models.} For the contribution of covariates, \citet{vytlacil2007dummy} show that it is possible to achieve the ATE point identification via the SV bounds if $X$ contains a continuous element or the exclusion restriction holds in both equations. \citet{shaikh2011partial} Remark 2.1 also discussed conditions for point identification.

Third, by imposing the support condition $\Omega_{X,P}=\Omega_X\times\Omega_P$, the SV bounds are \emph{sharp} and can be simplified to be functions of the two extreme values of the propensity score $P$.\footnote{The condition $\Omega_{X,P}=\Omega_X\times\Omega_P$ is saying that for any $x,x'\in\Omega_X$, there exist possible realizations $z,z'$ of $Z$ such that $\text{Pr}[D=1|x,z]=\text{Pr}[D=1|x',z']$. The same support condition is also used in \citet{mourifie2014partially}. If $X$ and $Z$ are both binary, then the point identification assumption in \citet{vytlacil2007dummy} is equivalent to $\Omega_{X,P}=\Omega_X\times\Omega_P$.} Using the expressions of the sharp SV bounds, we can confirm that the IAI condition guarantees the ATE point identification in the JTC model: if $\underline{p}=0,\overline{p}=1$ holds, then $L^{SV}(x)=U^{SV}(x)=$Pr$(Y=1|x,\overline{p}=1)-$Pr$(Y=1|x,\underline{p}=0)$. Note that the condition $\Omega_{X,P}=\Omega_X\times\Omega_P$ might fail to hold in practice, especially when the variation in $Z$ is limited. When it does not hold, \citet{mourifie2015sharp} provides an ATE sharp identified set by further exploiting the information on other individuals with different covariate and propensity score values.\footnote{\citet{han2020sharp} provide a linear programming method to compute the sharp bounds with binary scalar instrument. While, this paper considers more general settings where the instrument(s) can be discrete, continuous or mixed.} 

Given the SV bounds, the width of the SV bounds can be defined as
$$\omega^{SV}(x)=U^{SV}(x)-L^{SV}(x)\,.$$
Throughout the paper, we use the sign of the ATE and the reduction in the width of the identification set to measure identification gains and power.

\section{The Determinants of ATE Bounds}\label{sectionBounds}
We first study how three key factors drive the SV bounds: the conditional propensity score, the direction and degree of endogeneity, and the variation of covariates.

\subsection{The Conditional Propensity Score}
As discussed in the introduction, the propensity score is the vehicle that carries the identification information in the IVs. We start from the conditional propensity score (CPS), $P(x,Z)=$Pr$[D=1|X=x,Z]$, and examine its features that determine the ATE bounds. Define the two extremes of the CPS as $$\underline{p}(x):=\inf_{z\in\Omega_{Z|x}}\{p\in\Omega_{P|x,z}\}\;\text{ and }\;\overline{p}(x):=\sup_{z\in\Omega_{Z|x}}\{p\in\Omega_{P|x,z}\}.$$

\subsubsection{With Support Condition}
In this subsection, we assume the support condition $\Omega_{X,P}=\Omega_X\times\Omega_P$ holds. \citet{shaikh2011partial} show that the SV bounds are sharp and are functions of $\underline{p}$ and $\overline{p}$. In addition, the support of the CPS $P(x,Z)$ reduces to $\Omega_P$ for $\forall x\in\Omega_X$. Therefore, the two extreme values of the CPS become to $\underline{p}(x)=\underline{p}$ and $\overline{p}(x)=\overline{p}$, where $\underline{p}=\inf\{p\in\Omega_{P}\}$ and $\overline{p}=\sup\{p\in\Omega_{P}\}$. The proposition below shows that the extreme values of the CPS, i.e. $\underline{p}$ and $\overline{p}$, are crucial determinants of the lower and upper bounds and the width of the ATE sharp identified set.
\begin{proposition}\label{sv1}Under Assumption \ref{SVR} and $\Omega_{X,P}=\Omega_X\times\Omega_P$, for $\forall x\in\Omega_X$,
\begin{itemize}
  \item[(i)]$L^{SV}(x)$ is weakly increasing as $\underline{p}$ decreases or as $\overline{p}$ increases;
  \item[(ii)]$U^{SV}(x)$ is weakly decreasing  as $\underline{p}$ decreases or as $\overline{p}$ increases;
  \end{itemize}
  and hence
  \begin{itemize}
  \item[(iii)]$\omega^{SV}(x)$ is weakly decreasing as $\underline{p}$ decreases or as $\overline{p}$ increases.
\end{itemize}
\end{proposition}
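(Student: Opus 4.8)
The plan is to obtain the three monotonicity claims from the \emph{sharpness} of the SV bounds, rather than by differentiating or manipulating the formulas in \eqref{p40}--\eqref{p41} directly. The restated sharpness result (the first sentence of Proposition \ref{sv1}) says that, under the product support condition, $[L^{SV}(x),U^{SV}(x)]$ is exactly the collection of values $\theta$ for which some structure $(\nu_1,\nu_2,F_{\varepsilon_1,\varepsilon_2})$ satisfying Assumption \ref{SVR} attains $\text{ATE}(x)=\theta$ while reproducing every observed conditional probability $\text{Pr}[Y=y,D=d\mid x,p]$ with $p\in\Omega_P$. The whole argument rests on reading ``$\underline{p}$ decreases'' or ``$\overline{p}$ increases'' as enlarging this list of matching requirements, that is, as replacing $\Omega_P=[\underline{p},\overline{p}]$ by some $\tilde\Omega_P\supseteq\Omega_P$ while keeping the same data generating process.

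First I would make the comparison well posed by fixing the underlying structure and viewing a change in $\underline{p},\overline{p}$ as observing that same structure through a richer instrument. Index sufficiency, Assumption \ref{SVR}(a)--(b), is exactly what licenses this: given $x$ and a value $p=\text{Pr}[D=1\mid x,z]$, the threshold $\nu_2$ and the copula of $(\varepsilon_1,\varepsilon_2)$ pin down $\text{Pr}[Y=y,D=d\mid x,p]$, so these functions of $(x,p)$ do not depend on the range that $p$ is permitted to traverse. Hence the conditional probabilities are unchanged on the overlap and are merely made available at additional values of $p$, and any structure matching the data on $\tilde\Omega_P$ necessarily matches it on $\Omega_P\subseteq\tilde\Omega_P$.

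I would then transfer this to the level of identified sets. Because a richer $\tilde\Omega_P$ imposes a superset of the matching constraints, the admissible values of $\text{ATE}(x)$ can only shrink, so by the restated sharpness
\begin{equation*}
[\tilde L^{SV}(x),\tilde U^{SV}(x)]\subseteq[L^{SV}(x),U^{SV}(x)].
\end{equation*}
This inclusion gives $\tilde L^{SV}(x)\geq L^{SV}(x)$ and $\tilde U^{SV}(x)\leq U^{SV}(x)$, and therefore $\tilde\omega^{SV}(x)\leq\omega^{SV}(x)$. Since both of the stated operations enlarge $\Omega_P$, the three conclusions (i), (ii) and (iii) follow at once and for both directions simultaneously.

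The main obstacle, I anticipate, is not the inclusion step but verifying that the comparison is legitimate, namely that one can enlarge the support of $P$ while holding the conditional outcome probabilities fixed on the overlap and still satisfying the product support condition that sharpness requires; this is precisely where index sufficiency and the restated Shaikh--Vytlacil result carry the load. I would deliberately avoid a term-by-term attack on \eqref{p40}--\eqref{p41}: although the outer optima over $p\in\Omega_{P|x}$ clearly respond monotonically to enlarging the domain, the inner optima over $x'$ run over the sets $\mathbf{X}_{0\pm}(x),\mathbf{X}_{1\pm}(x)$, whose membership is decided by the sign of $H(x,x')=\mathbb{E}[h(x,x',P,P')\mid P>P']$. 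Since $H$ averages $h$ against the distribution of $P$, and under the threshold structure $h(x,x',p,p')$ combines one piece increasing in $p$ with one decreasing in $p$ and so carries no pointwise sign, enlarging $\Omega_P$ can in principle move these sets; the identified-set argument bypasses this complication completely.
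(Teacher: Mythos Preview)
Your argument is correct and takes a genuinely different route from the paper. You exploit the restated sharpness result to identify $[L^{SV}(x),U^{SV}(x)]$ with the identified set for $\text{ATE}(x)$, observe that enlarging $\Omega_P$ only adds matching constraints (by index sufficiency the conditional probabilities on the overlap are unchanged), and conclude that the identified set weakly shrinks; the three monotonicity claims then drop out at once. This is clean and bypasses the formulas in \eqref{p40}--\eqref{p41} entirely.

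The paper instead works directly with those formulas. It decomposes the bounds as $L^{SV}(x)=\sup_p L_1(x,p)-\inf_p U_0(x,p)$ and $U^{SV}(x)=\inf_p U_1(x,p)-\sup_p L_0(x,p)$, and shows each of $L_0,L_1,U_0,U_1$ is monotone in $p$ by explicit calculation. The step you flag as problematic---that the inner optima over $\mathbf{X}_{0\pm}(x),\mathbf{X}_{1\pm}(x)$ might shift because $H(x,x')$ averages against the distribution of $P$---is handled in the paper by invoking Lemma~2 of \citet{shaikh2011partial}: under Assumption~\ref{SVR} the sign of $H(x,x')$ is equivalent to the sign of a structural comparison such as $\nu_1(1,x')-\nu_1(0,x)$, so these sets do not actually depend on $\Omega_P$. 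With that in hand, the inner optimizer $x'$ can be frozen and the $p$-monotonicity follows from a short computation.

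What each approach buys: the paper's term-by-term analysis yields the sharper conclusion that the outer optima are \emph{attained at the endpoints} $\underline{p},\overline{p}$, which is convenient for the later widest-bound and decomposition results. Your identified-set argument is more conceptual, does not rely on the structural characterization of the $\mathbf{X}$ sets, and would transfer unchanged to any other sharp bound under the same model---but it does require that the product support condition and hence sharpness hold for \emph{both} the original and the enlarged $\Omega_P$, which you correctly highlight as the one place where care is needed.
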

Complementing the results in \citet{shaikh2011partial}, Proposition \ref{sv1} fills in the blanks by \emph{explicitly} showing that when the IAI fails to hold, the contribution of IVs to the ATE bounds are delivered by the magnitude of $\underline{p}$ and $\overline{p}$ (while holding anything else fixed). In other words, the two extreme values of the CPS define the \emph{IV strength} in the JTC model. More importantly, we find a monotone relationship between $\underline{p}$ and $\overline{p}$ and the ATE bounds. We emphasize that this monotone
result is not found in \citet{heckman1999local,heckman2001instrumental}, \citet{chesher2010instrumental} and \citet{shaikh2011partial}.\footnote{\citet{heckman1999local,heckman2001instrumental} show that in the treatment threshold crossing model, the ATE bound width is linearly related to the two extreme values of CPS, while they do not give the similar result for the lower and upper ATE bounds. \citet{chesher2010instrumental} briefly discusses the importance of IV strength and support in determining the extent of ATE identified set for models with threshold crossing in outcome. \citet{shaikh2011partial} only show that the bounds are functions of $\underline{p}$ and $\overline{p}$ in the JTC model.}

The implications of Proposition \ref{sv1} are significant. For homogeneous treatment response models such as linear regression models, any valid and relevant IVs would lead to the same identification gains because the ATE is point identified (e.g. by Wald estimand or 2SLS), therefore missing IVs or misclassifying continuous IVs into binary ones play no role in identifying the ATE \citep{heckman2006understanding}. However, in heterogeneous treatment effect models, the fact that the CPS affects the ATE bounds via the distance of its extreme values to zero and one demonstrates that missing and misspecified IVs will result in narrower span of CPS, which may lead to nontrivial loss in identification power.

\begin{remark}
The concept of IV strength determined by $\underline{p}$ and $\overline{p}$ is different from the conventional IV strength measures in studies of weak IV tests (e.g., measured by the first-stage $F$-stat for continuous endogenous regressors, or the pseudo-$R^2$ for binary response variables). Because the key ingredients of the latter are the correlation between the IVs and the endogenous regressors, and the variation of the IVs to that of the random noise. Proposition \ref{sv1} indicates that two IV sets with the same $\underline{p}$ and $\overline{p}$ will make identical contributions to identification gains in SV sharp bounds (for given direction and degree of endogeneity and covariates), irrespective of their correlation with the endogenous regressors or their variability.
\end{remark}

\subsubsection{Without Support Condition}
The support condition $\Omega_{X,P}=\Omega_X\times\Omega_P$ may fail when the instruments have limited variation. \citet{mourifie2015sharp} establishes the ATE sharp identified set via exploiting the variation of covariates without using the support condition. We are able to establish and verify the relationship between the CPS and a tractable characterization of ATE bounds, which is an outer set of \citet{mourifie2015sharp}'s sharp identified set and also an outer set of the SV bounds.\footnote{The analysis using the outer set is practically useful because empirical studies often construct confidence region based on an outer set \citep[see, e.g.,][]{chesher2020econometric}. Since an outer set always includes the sharp identified set, the analysis is conservative yet valid as long as the model is correctly specified \citep{molinari2020microeconometrics,kedagni2020discordant}. Moreover, discussions in Section \ref{sectionCOV} imply that further improvement in the outer set towards the sharp identified set of \citet{mourifie2015sharp} can be attributed to the information of covariates.} The explicit expressions of the outer set, denoted by $[\underline{L}^{SV}(x),\overline{U}^{SV}(x)]$ can be found in \eqref{svw_p} and \eqref{svw_n}; see the proof of Proposition \ref{svw}. The width of the outer set, denoted by $\overline{\omega}(x)=\overline{U}^{SV}(x)-\underline{L}^{SV}(x)$, is given in the proposition below.

\begin{proposition}\label{svw}Under Assumption \ref{SVR}, for any given $x\in\Omega_X$,
\begin{align*}
&\text{if ATE}(x)>0\,,~\mbox{then}~
\overline{\omega}(x)=\text{Pr}\left[Y=1,D=1|x,\underline{p}(x)\right]+\text{Pr}\left[Y=0,D=0|x,\overline{p}(x)\right]\,;\\
&\text{if ATE}(x)<0\,,~\mbox{then}~
\overline{\omega}(x)=\text{Pr}\left[Y=1,D=0|x,\overline{p}(x)\right]+\text{Pr}\left[Y=0,D=1|x,\underline{p}(x)\right].
\end{align*}
Moreover,
\begin{itemize}
  \item[(i)]$\underline{L}^{SV}(x)$ is weakly decreasing as $\underline{p}(x)$ decreases or as $\overline{p}(x)$ increases;
  \item[(ii)]$\overline{U}^{SV}(x)$ is weakly increasing as $\underline{p}(x)$ decreases or as $\overline{p}(x)$ increases;
  \item[(iii)]$\overline{\omega}(x)$ is weakly decreasing as $\underline{p}(x)$ decreases or as $\overline{p}(x)$ increases.
\end{itemize}
\end{proposition}
We can see that the width of the outer set is monotone in the extreme values of CPS, i.e. $\underline{p}(x)$ and $\overline{p}(x)$. Thus, it is no doubt that the extreme values of CPS are also crucial determinants of the ATE sharp bounds, even though the relation may not be monotone. In addition, Proposition \ref{svw} also confirms that the IAI leads to $\overline{\omega}(x)=0$ and ATE point identification.

\begin{remark}
  Although the magnitude of the CPS extreme values determines the location and width of the ATE bounds, simply comparing the difference $\overline{p}(x)-\underline{p}(x)$ of various IV sets may not be a proper practice to compare instrument strength and predict their resulting ATE bounds. For different IV sets, if their CPS supports are nesting each other, then, ceteris paribus, a larger $\overline{p}(x)-\underline{p}(x)$ produces tighter ATE bounds. However, if their CPS supports do not overlap or partially overlap, even if the difference, say $\overline{p}(x)-\underline{p}(x)$ and $\overline{p}'(x)-\underline{p}'(x)$, are the same, their associated ATE bounds are not directly comparable simply based on this and without further investigation, because it is not clear to what extent the change of the ATE bounds caused by moving $\overline{p}(x)$ to $\overline{p}'(x)$ can be offset by moving $\underline{p}(x)$ to $\underline{p}'(x)$.
\end{remark}

\subsection{The Direction and Degree of Endogeneity}
Next, we illustrate that, for given IV strength (i.e., $\overline{p}(x)$ and $\underline{p}(x)$), how the treatment endogeneity can enhance or hinder the IV identification power and SV bounds.
We introduce a family of bivariate single parameter copulae that specifies the joint distribution of $(\varepsilon_1,\varepsilon_2)$, while we do not require the copula nor the marginal distributions to be known. Denote a copula as $C(\cdot,\cdot;\rho):(0,1)^2\mapsto(0,1)$, where $\rho\in\Omega_\rho$ is a scalar dependence parameter that fully describes the joint dependence between $\varepsilon_1$ and $\varepsilon_2$, and their dependence increases as $\rho$ increases.\footnote{In the special case of a normal bivariate probit model $\rho\in[-1,1]$ represents the correlation between the error terms.} For any given copula, the sign and magnitude of $\rho$ can be understood as the direction and degree of endogeneity.

We also impose additional dependence structure, the concordance ordering, on the copula $C(\cdot,\cdot;\rho)$. Following \citet{joe1997multivariate}, 
for $\rho_1\neq\rho_2$ and $u_1,u_2\in(0,1)^2$, we say that the copula $C(\cdot,\cdot;\rho)$ satisfies the \emph{concordant ordering} with respect to $\rho$, denoted as $C(u_1,u_2;\rho_1)\prec_cC(u_1,u_2;\rho_2)$, if
\begin{align}\label{CO}C(u_1,u_2;\rho_1)\leq C(u_1,u_2;\rho_2),~\text{for any }\rho_1<\rho_2.
\end{align}
The concordant ordering is a stochastic dominance restriction and is embodied in many well-known copulae, including the normal copula. 
Similar stochastic dominance conditions are employed in, e.g., \citet{han2017identification} and \citet{han2019estimation}, to derive identification and estimation results for the parametric bivariate probit model and its generalizations. Denote the cumulative distribution function of any random variable $A$ as $F_{A}$.
\begin{assumption}\label{ass2_sv}The joint distribution of $(\varepsilon_1,\varepsilon_2)'$ is given by a member of the single parameter copula family
$F_{\varepsilon_1,\varepsilon_2}(e_1,e_2)=C(F_{\varepsilon_1}(e_1),F_{\varepsilon_2}(e_2);\rho),$ for $(e_1,e_2)\in\mathbb{R}^2$,
where $C(\cdot,\cdot;\rho)$ satisfies the concordant ordering with respect to $\rho$. \end{assumption}
Assumption \ref{ass2_sv} does not require $C(\cdot,\cdot;\rho)$ nor $F_{\varepsilon_1}$ and $F_{\varepsilon_2}$ to be known, and is used to establish the impacts of the dependence parameter $\rho$ on the SV bounds.\footnote{The same general conclusions are expected to follow in cases where the parametric copula is relaxed to nonparametric copulae with similar ordering conditions.}
\begin{proposition}\label{sv2}Under Assumptions \ref{SVR} and \ref{ass2_sv}, we have that
\begin{itemize}
  \item[(i)]if ATE$(x)>0$, $\overline{\omega}(x)$ is weakly increasing in $\rho$;
  \item[(ii)]if ATE$(x)<0$, $\overline{\omega}(x)$ is weakly decreasing in $\rho$.
\end{itemize}
\end{proposition}
Proposition \ref{sv2} implies that the outer set of the SV bounds is impacted by the direction and the degree of endogeneity. Intuitively, this is because the outer set is constructed using the joint probabilities of the outcome and the treatment. In particular, the effect of endogeneity's direction is asymmetric: given a positive ATE, negative dependence between the two error terms helps narrow down the ATE bound width, while the opposite holds for a negative ATE. Thus, even for IVs that have the same $\underline{p}(x)$ and $\overline{p}(x)$, the information contained in the IVs can be correspondingly scaled via the leverage induced by the direction and degree of endogeneity.

\begin{remark}Proposition \ref{sv2} emphasizes that the \textit{IV strength} is a different concept from the \textit{IV identification power} in JTC models with binary dependent variables. A set of ``seemingly weak'' IVs judged from the extremes of CPS alone, may actually achieve significant identification gains in an empirical context where the ATE and the direction of endogeneity have opposite sign. Thus, the IVs actually have strong identification \textit{power}. Conversely, a set of ``seemingly strong'' IVs can be surprisingly \textit{powerless} when the ATE and the direction of endogeneity have the same sign, resulting in wide ATE bounds. Thus, having knowledge of the signs of the ATE and the direction of endogeneity can help applied researchers to form expectations on the IV identification power and the ATE bound width.
\end{remark}
\subsection{Covariate Support and Variability}\label{sectionCOV}
In this section, we summarize some key results of covariates in the ATE identification literature. As we have seen from the construction of the SV bounds, covariates contribute to partially identifying the ATE. It is therefore a useful insight for empirical researchers that having a rich set of covariates can help shrink the ATE bounds, for given IV strength. Here, the impact of covariates can be considered as an extension of the popular propensity score matching method in conventional empirical studies. In fact, even without the IAI condition, in certain situations, the covariate variability can ensure the point identification of the ATE. In other cases, where the covariate variability is minimal, no further tightening can be achieved beyond the outer set. Define \begin{align*}
\mathcal{X}^0=&\{x\in\Omega_X:~\exists (\tilde{x},\tilde{z},z)~s.t.~\nu_1(1,\tilde{x})=\nu_1(0,x)\text{ and }P(x,z)=P(\tilde{x},\tilde{z})\in\Omega_P\},\\ \mathcal{X}^1=&\{x\in\Omega_X:~\exists (\tilde{x},\tilde{z},z)~s.t.~\nu_1(1,x)=\nu_1(0,\tilde{x})\text{ and }P(x,z)=P(\tilde{x},\tilde{z})\in\Omega_P\}.
\end{align*}
\begin{proposition}\label{prop3_1}\citep[][Remark 2.2]{vytlacil2007dummy,shaikh2011partial}
Under Assumption \ref{SVR},
for any $x\in \mathcal{X}^0\cap \mathcal{X}^1$, we have $\omega^{SV}(x)=0$ and ATE$(x)$ is point identified.
\end{proposition}
Proposition \ref{prop3_1} is a special case of Theorem 4.1 of \citet{vytlacil2007dummy} when the outcome is a binary variable, and it is also discussed by \citet{shaikh2011partial} Remark 2.2. It states that when the variation in $X$ exactly compensates for the variation in $D$, which is satisfied by those values $x\in \mathcal{X}^0\cap \mathcal{X}^1$, the point identification of the ATE can be obtained. Intuitively, the identification is achieved by finding a shift in covariates that offsets a shift in the treatment to make the outcome and the probability of getting treated unchanged. \citet{mourifie2015sharp} also uses similar idea to tighten the SV bounds.
\begin{proposition}\label{prop3_2}
Under Assumption \ref{SVR},
\begin{itemize}
  \item[(i)]\citep{chiburis2010semiparametric} for any given $x\in\Omega_X$, if there exists no $(\tilde{x},\tilde{z},z)$ and $\tilde{x}\neq x$ such that $P(x,z)=P(\tilde{x},\tilde{z})\in\Omega_P$, then $\omega^{SV}(x)=\overline{\omega}(x)$;
  \item[(ii)]if the random variable $\nu_1(D,X)|D$ is degenerate, then $\omega^{SV}(x)=\overline{\omega}(x)$ for all $x\in\Omega_X$.
\end{itemize}
\end{proposition}
Proposition \ref{prop3_2} provides two conditions of covariate, under which the SV bounds fail to utilize any information in $X$ to compensate the impact of the variation in $D$ on the outcome, and therefore the SV bounds coincide with its outer set. Particularly, condition (i) summarizes the results described in Section 3.1.2 in \citet{chiburis2010semiparametric}, when none of other propensity score values match exactly to $P(x,z)$. The condition in (ii) holds, if either $X$ is a void variable, or if $X$ has no impacts on outcome.

The results above and the construction of SV bounds both indicate that for any given IVs strength, if the variation of covariates can be enriched, then SV bounds shrink. Thus, for the given IV strength, any further improvement of the outer set towards SV bounds (or, towards \citet{mourifie2015sharp}'s bounds) can be attributed to additional identification information in covariates, 
whose variations compensate for the treatment's variation to some degree. 

\section{Decomposing Identification Gains and IV Identification Power}\label{subsectionIG}
In this section, we present a novel decomposition of the identification gains of the SV bounds into components attributable to IVs and exogenous covariates. In addition, we introduce the IV identification power index, $IIP$.
\subsection{Decomposing Identification Gains}
We first introduce the benchmark ATE bounds of \citet{manski1990nonparametric} which use no IVs and are often referred to as ``the worst case scenario'' \citep[see][]{tamer2010partial,chiburis2010semiparametric,bhattacharya2012treatment}. The Manski bounds are used as a benchmark, because if IVs are all irrelevant, the ATE SV bounds collapse to the ATE Manski bounds.\footnote{See Remark 2.1 of \citet{shaikh2011partial} and \citet{chiburis2010semiparametric} Corollary 1. In particular, Corollary 1 of \citet{chiburis2010semiparametric} shows that if no IVs are relevant, although the outcome threshold crossing condition tightens the identified set of $\nu_1(0,x)$ and $\nu_1(1,x)$ relative to those under Manski's framework, it fails to improve the ATE bounds.} Denote $L^M(X)$ and $U^M(x)$ as the lower and upper Manski bound of ATE$(x)$, respectively,
\begin{equation}
\begin{aligned}\label{7}
L^M(x)&=-\text{Pr}[Y=1,D=0|x]-\text{Pr}[Y=0,D=1|x],\\
U^M(x)&=\text{Pr}[Y=1,D=1|x]+\text{Pr}[Y=0,D=0|x].
\end{aligned}
\end{equation}
It is apparent that the width of the Manski bounds, defined as $\omega^M(x)=U^M(x)-L^M(x)$, are one, for any $x\in\Omega_X$, with the lower bound and upper bound falling on either side of zero. Our decomposition of identification gains is inspired by the results in Section \ref{sectionBounds}. We start with the Manski bounds, and divide the Manski bounds into four components, which represent incremental identification gains made by the SV bounds over the benchmark Manski bounds, due to different determinants.

 \begin{itemize}
   \item [(i)] $C_1(x)$: \textbf{Contribution of IV Validity}. The first component of the identification gains is due to the identification of the ATE$(x)$ sign. This contribution is accredited to IV validity, since we can identify the sign of the ATE$(x)$ if the IVs are independent of the error term $(\varepsilon_1,\varepsilon_2)$ and $\nu_2(X,Z)|X$ is nondegenerate (or equivalently, if the IVs are valid) regardless of the IV strength.\footnote{If ATE$(x)$=0 is identified, the contribution of SV bounds using any valid IVs already leads to the ATE point identification.} For $\forall x\in\Omega_X$,
       $$C_1(x)=\begin{cases}U^M(x),&\text{ if ATE}(x)<0,\\L^M(x),&\text{ if ATE}(x)>0.\end{cases}$$
   \item [(ii)] $C_2(x)$: \textbf{Contribution of IV Strength}.
   Conditional on the first component, IV validity, the second component captures the further reduction achieved by the SV bounds to its outer set, via intersecting over all possible values of $Z$. This is reflected in the dependence of the SV bounds in \eqref{p40} and \eqref{p41} on the two extreme values of the CPS. The closer the extreme values to 0 and 1 are, the greater is $C_2(x)$. Therefore, identification gains attributed to IV strength can be measured as
       $$C_2(x)=\omega^M(x)-C_1(x)-\overline{\omega}(x).$$
    \item [(iii)] $C_3(x)$: \textbf{Contribution of Covariates}. The third component is the further reduction of the SV bounds beyond $C_1(x)$ and $C_2(x)$, brought about by the variation of exogenous covariates in the outcome equation and its ability to compensate for the effect caused by changes in the treatment, as implied by Proposition \ref{prop3_1} and \ref{prop3_2}. Hence, the component $C_3(x)$ is attributed to the exogenous covariates:
   $$C_3(x)=\overline{\omega}(x)-\omega^{SV}(x).$$
   \item [(iv)]$C_4(x)$: \textbf{Remaining SV Bound Width}. The last component relates to the remaining SV bounds that cannot be further reduced by the observable data under the SV modeling assumptions. This component can be thought of as the signal-to-noise ratio of the error terms. By construction, we have $C_4(x)=\omega^{SV}(x)$.
 \end{itemize}
It is easy to see that $C_1(x)+C_2(x)+C_3(x)+C_4(x)=\omega^M(x)=1$. If $\nu_2(X,Z)|X$ is degenerate and the IVs have no explanatory power for the treatment, then $C_1(x)=C_2(x)=C_3(x)=0$ and the SV bounds reduce to Manski bounds.  In addition, $C_1(x)$ to $C_4(x)$ can always be identified and estimated from the data. In practice, once the model has been estimated (parametrically or non-parametrically), the estimates can be used to construct the decomposition. 
\begin{remark}
If we consider the identification gains of the sharp bounds in \citet{mourifie2015sharp} relative to the benchmark Manski bounds, the decomposition above still holds with changes in expressions of $C_3(x)$ and $C_4(x)$. This is because that the improvement of \citet{mourifie2015sharp} bounds to the SV bounds is due to variation of covariates.
\end{remark}
\begin{remark}
It is worth to note that although we do not decompose the identification gains based on the direction and the degree of endogeneity, the magnitude of all the four components varies with them. According to Proposition \ref{sv2}, the endogeneity affects $\overline{\omega}(x)$, which enters all four components either directly or indirectly due that the summation of the four components is a fixed value one.
\end{remark}
\subsection{IV Identification Power (IIP)}\label{sectionIVIP}
Based on the decomposition, we can then construct a quantitative measurement of IV identification power in the partial identification setting.
For $\forall x\in\Omega_X$, define the IV identification power $IIP(x)$ as
\begin{align}\label{R}
IIP(x):=\begin{cases}\omega^M(x)-\overline{\omega}(x)=C_1(x)+C_2(x),~&\text{if $\nu_2(X,Z)|X=x$ is nondegenerate}\\0,~&\text{if $\nu_2(X,Z)|X=x$ is degenerate}
\end{cases}
\end{align} where $\overline{\omega}(x)$ is the width ATE outer set defined in Proposition \ref{svw} and $IIP(x)\in[0,1]$. Setting $IIP(x)=0$ when $\nu_2(X,Z)|X=x$ is degenerate is equivalent to setting $\overline{\omega}(x)=\omega^M(x)=1$.\footnote{The definition allows $IIP(x)$ to be discontinuous at $\Omega_{P|x}=p_x$ for some constant $p_x\in[0,1]$, i.e. when $\Omega_{P|x}$ is a singleton.}
$IIP(x)$ represents the identification gains that are due to the IVs alone and it can be viewed as an index of the IV identification power. The overall IV identification power can be obtained by $\mathbb{E}_X[IIP(X)]$. The following proposition formalizes some important properties of $IIP(x)$.
\begin{proposition}\label{coro2}
Let Assumption \ref{SVR} (a) to (d) hold.
\begin{enumerate}
             \item[(i)]$IIP(x)=0$ if and only if none of the IVs are relevant; if $IIP(x)=0$ then the SV bounds reduce to the benchmark Manski bounds;
             \item[(ii)]$IIP(x)=1$ if the IVs have perfect predictive power for the treatment $D$ (i.e., IAI), in the sense that there exists $p^*$ and $p^{**}$ in $\Omega_{P|x}$ such that $\text{Pr}[D=1|x,p^*]=0$ and $\text{Pr}[D=1|x,p^{**}]=1$. Moreover, the ATE$(x)$ is point identified when $IIP(x)=1$.
           \end{enumerate}
\end{proposition}
Proposition \ref{coro2} indicates that values of $IIP(x)$ can be compared, across different sets of IVs, or across different values of $x$ given the same set of IVs, since they are standardized relative to the same benchmark.\footnote{$IIP(x)$ or $\mathbb{E}_X[IIP(X)]$ can also be compared across various studies if necessary.} For example, $IIP(x)=0.4$ can be interpreted as that the Manski bounds can be reduced by 40\% by using instruments alone.\footnote{Theoretically, the value of $IIP(x)$ should lie in $[0,1]$ and the width of Manski bounds is always one. Then $IIP(x)$ can be interpreted as the percentage points of the identification gains brought by the IVs. In finite sample settings where the estimated Manski bound width may no longer be exact one, then the sample explanation can be obtained by computing the ratio $\hat{IIP}(x)/\hat\omega^M(x)$ using their estimates.} In addition, the values of $IIP(x)$ at its end points are intuitively interpretable: $IIP(x)=0$ indicates that IVs are completely irrelevant and SV bounds reduce to Manski bounds; and $IIP(x)=1$ when the IVs are able to perfectly predict the treatment status and ATE$(x)$ is point identified.

$IIP(x)$ is a meaningful measure of IV usefulness for improving the ATE partial identification, because it incorporates the impacts of the direction and degree of endogeneity on the ATE bounds. Using $IIP(x)$ instead of the CPS extreme values emphasizes the fact that the latter one cannot provide a full picture of the IV identification power. For example, suppose a set of IVs has a small CPS support and the IVs seem to be ``weak''. However, a large $IIP(x)$ can be achieved if the magnitude of treatment endogeneity is large and it has an opposite sign with that of the ATE. See for example, the numerical examples in Figure \ref{fig:IG0.25} of Section \ref{sectionNA}, when ATE$(x)>0$, $\rho=-0.8$ and $\lambda$ close to zero,\footnote{In the numerical example, $\lambda$ determines the CPS extremes. Closer to zero $\lambda$ means narrower span of the CPS support.} we have $IIP(x)>$50\%. Thus, the identification power of the IV set can actually be very strong. Conversely, if the treatment endogeneity is of the same sign as that of the ATE, a reasonable CPS range may produce barely satisfactory $IIP(x)$ and identification gains. See for example, the numerical examples when ATE$(x)>0$, $\rho=0.8$ and $\lambda$ close to 0.5, we have $IIP(x)<$50\%.

\begin{remark}
Note that $IIP(x)$ ignores the component of identification gains attributable to the exogenous covariates, namely $C_3(x)$. This neglect is reasonable, because the discussion in Section \ref{sectionCOV} demonstrates that if there is no variation in covariates, then $C_3(x)=0$ regardless of the information in IVs. Thus, any further improvement of the SV bounds from its outer set must be via the covariate variation. This indicates that $IIP(x)$ is a measure of identification gains due to IVs alone and it measures the smallest identification power of a given set of IVs.
\end{remark}

\section{Numerical Illustration}\label{sectionNA}
In this section we illustrate numerically and graphically the results on how each determinant affects the SV bounds and the decomposition of identification gains.\footnote{To make the paper manageable, we focus on the SV bounds. Our numerical analysis may be of broader interest as a prototype for evaluation of IV identification power that could be conducted in other frameworks under weaker assumptions, such as those in \citet{chesher2010instrumental}, \citet{heckman2001instrumental}, \citet{manski2000monotone} and \citet{mourifie2015sharp}.} Consider a data generating process (DGP) with a linear additive latent structure, which is similar to that studied in \citet{li2019bivariate}:
\begin{equation}
\begin{aligned}\label{model2}
Y&=1[\alpha D+\beta X+\varepsilon_1>0],\\
D&=1[\gamma Z+\pi X+\varepsilon_2>0],
\end{aligned}
\end{equation}
where $X\perp Z$, $X\sim\mathbb{N}(0,1)$ and $Z\in\{-1,1\}$ with $\text{Pr}(Z=1)=1/2$. In addition, $(X,Z)'\perp(\varepsilon_1,\varepsilon_2)$ where $(\varepsilon_1,\varepsilon_2)$ is zero mean bivariate normal with unit variances and correlation $\rho$. We set $\alpha=1$ and $\pi=0$ across all parameter settings. Given this specification, there is a monotonic one-to-one mapping from the coefficient of the IV, $\gamma$, to the extreme values of the CPS.

\begin{table}[htb!]
\begin{center}
\caption{Parameter Settings in Numerical Illustration}\label{tab:numerical_DGP}
\setlength{\tabcolsep}{4pt}
\vspace{-0.25cm}
		\begin{tabular}{lcl}
			\hline
			\hline
&&Parameter Setting\cr
\hline
extreme values of CPS&&$\gamma=-4\;:\;0.2\;:\;4$\cr
direction and degree of endogeneity&&$\rho=-0.99\;:\;0.05\;:\;0.99$\cr
impact of exogenous covariate&&$\beta\in\{0.05,0.25,0.45\}$\cr
\hline
\hline
\end{tabular}
\end{center}
\vspace{-0.3cm}
\end{table}

We study the SV bounds under different parameter settings displayed in Table \ref{tab:numerical_DGP}. We capture the extreme values of the CPS by $\gamma$, the direction and degree of endogeneity by $\rho$, and the impact of exogenous covariate by $\beta$.
We compute the SV bounds and the Manski bounds, and implement the identification gains decomposition under the true DGP.
In what follows we present the outcomes at $x=\mathbb{E}[X]$.\footnote{Our numerical experiments have been conducted at various quantile points of $X$, but space considerations prevent us from listing all the results. We present the outcomes at $x=E[X]$ as these are representative.}

\subsection{Determination of ATE Bounds}
In Figure \ref{fig:3d}, we plot in the first row the SV bounds for the ATE$(x)$, and in the second row the bound width.
The SV bounds reduce to the Manski bounds when the IVs are irrelevant with $\gamma=0$ (the separate lines in the graphs at $\gamma=0$). When $\gamma$ moves away from zero, the SV bound width has a significant drop. In addition, since ATE$(x)$ is positive ($\alpha>0$), the SV bound width increases as $\rho$ increases. Moreover, comparison of the plots for different values of $\beta$ reveals that larger $\beta$ produces significantly narrower bound width. When $\beta=0.45$, point identification of the ATE$(x)$ is achieved for most of the $(\gamma,\rho)$ pairs because the condition in Proposition \ref{prop3_1} is satisfied.
\begin{sidewaysfigure}[p]
\centering
\caption{Manski and SV Bounds for ATE ($x=\mathbb{E}[X]$)}
\label{fig:3d}
\includegraphics[width=1.05\textwidth]{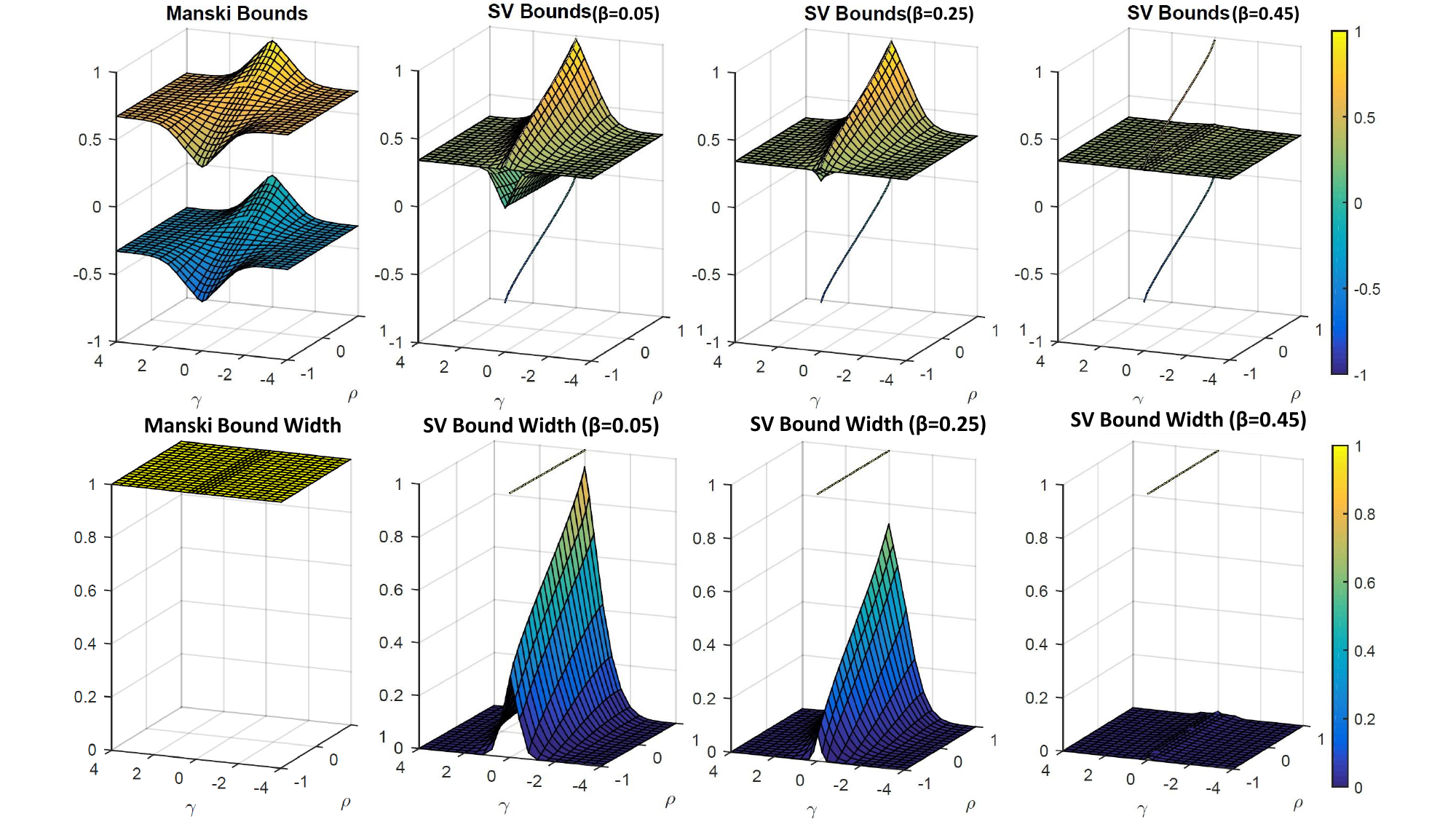}
\vspace{-0.1cm}
\footnotesize{Note: Three dimensional plots of the ATE bounds as function of $(\gamma,\rho)$. When $\gamma=0$, SV bounds reduce to Manski bounds with bound width one.}
\end{sidewaysfigure}

\subsection{Identification Gains Decomposition}
Figure \ref{fig:IG0.25} displays the decompositions of identification gains for $\gamma\in\{1,2\}$, $\rho\in\{-0.8,-0.5,0.5,0.8\}$ and $\beta\in\{0.05,0.25,0.45\}$. For the positive ATE$(x)$ case, the contribution of IV validity, $C_1(x)$, is determined by the Manski lower bound, and decreases as $\rho$ increases,\footnote{Conversely the numerical results not reported here show that when the ATE$(x)$ is negative $C_1(x)$ increases as $\rho$ increases.} while $C_1(x)$ is invariant to $\beta$.
By way of contrast, the component $C_2(x)$ also does not change by $\beta$, but it increases significantly as the magnitude of $\gamma$ increases. The component of identification gains due to the exogenous covariates, $C_3(x)$, also contributes significantly to the bounds. When $\beta$ is relatively large (e.g. $\beta=0.45$), point identification is virtually achieved.
\begin{figure}[p]
\begin{center}
\caption{Decomposition of Identification Gains ($x=\mathbb{E}[X]$)}\label{fig:IG0.25}
\vspace{-0.25cm}
\subcaption{$\beta=0.05$}
\vspace{-0.3cm}
\includegraphics[width=0.95\textwidth]{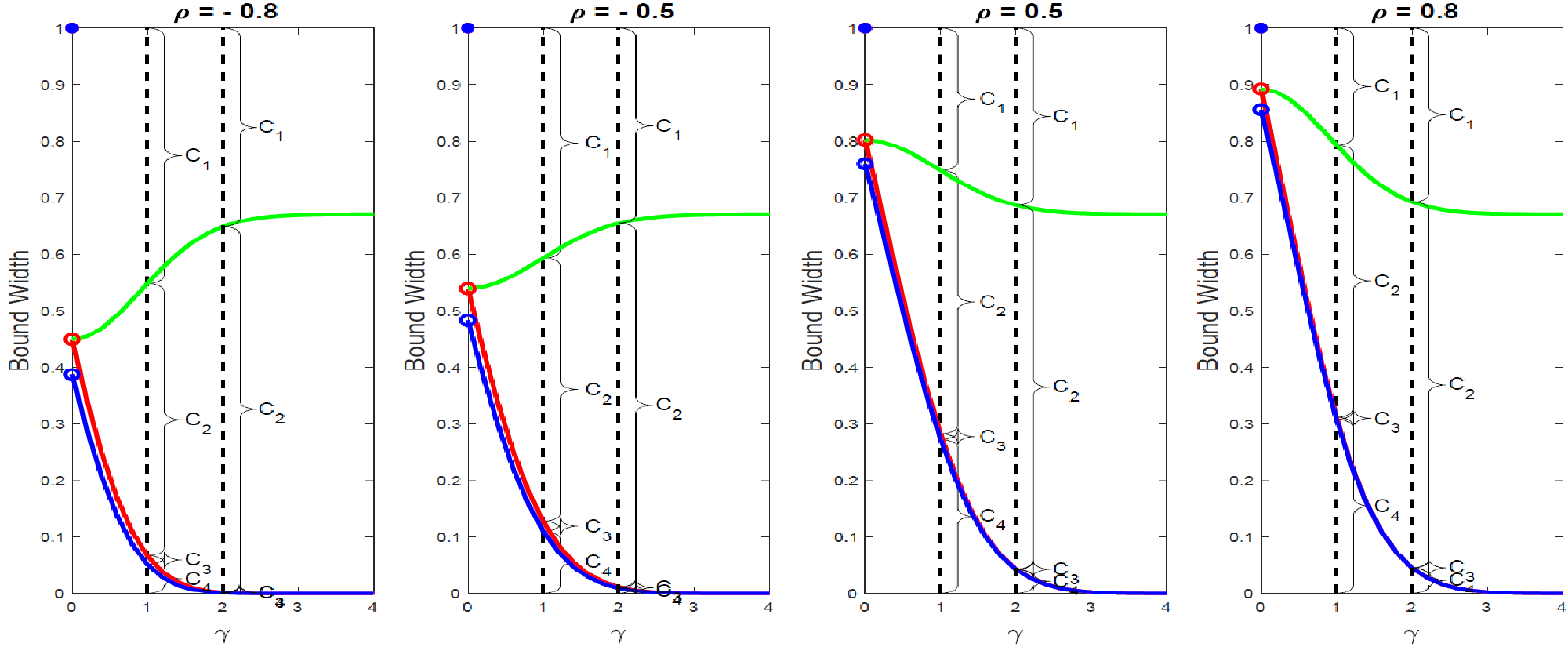}
\vspace{-0.3cm}
\subcaption{$\beta=0.25$}
\vspace{-0.3cm}
\includegraphics[width=0.95\textwidth]{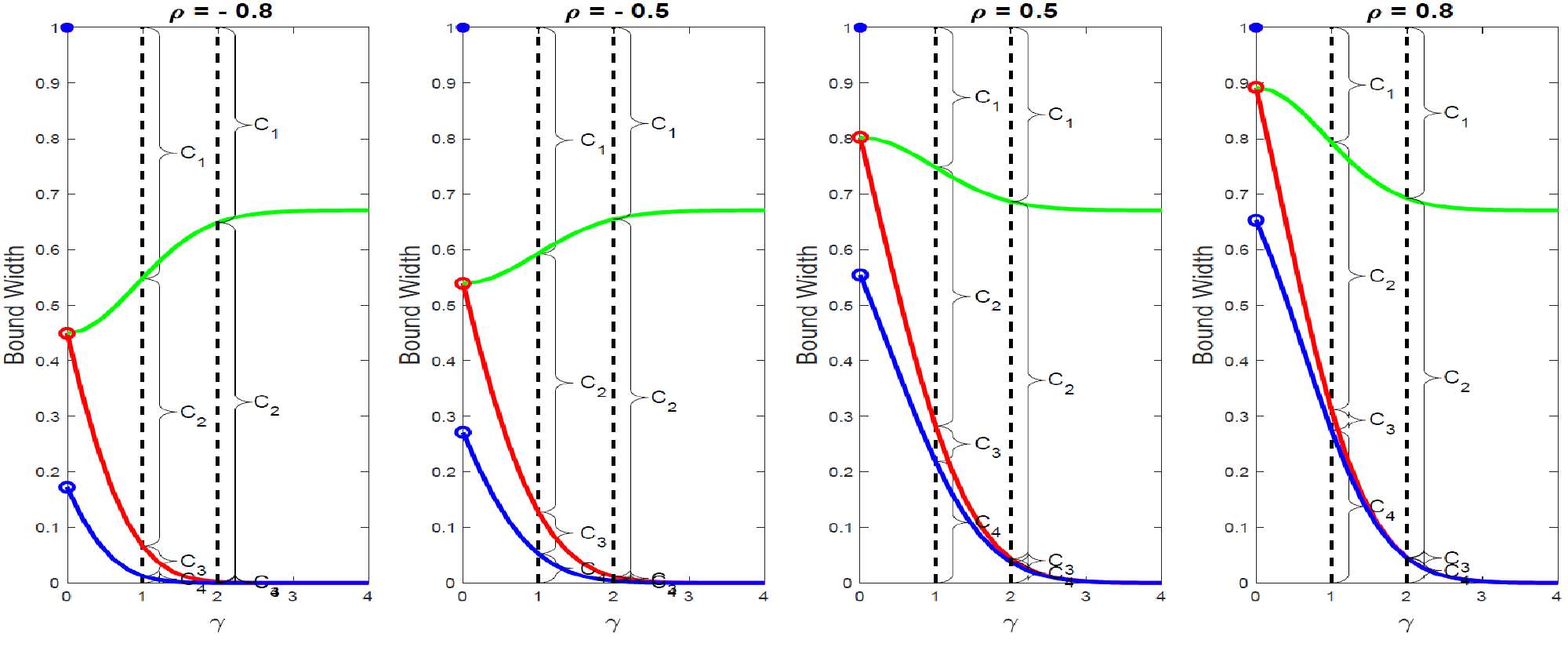}
\vspace{-0.3cm}
\subcaption{$\beta=0.45$}
\vspace{-0.3cm}
\includegraphics[width=0.95\textwidth]{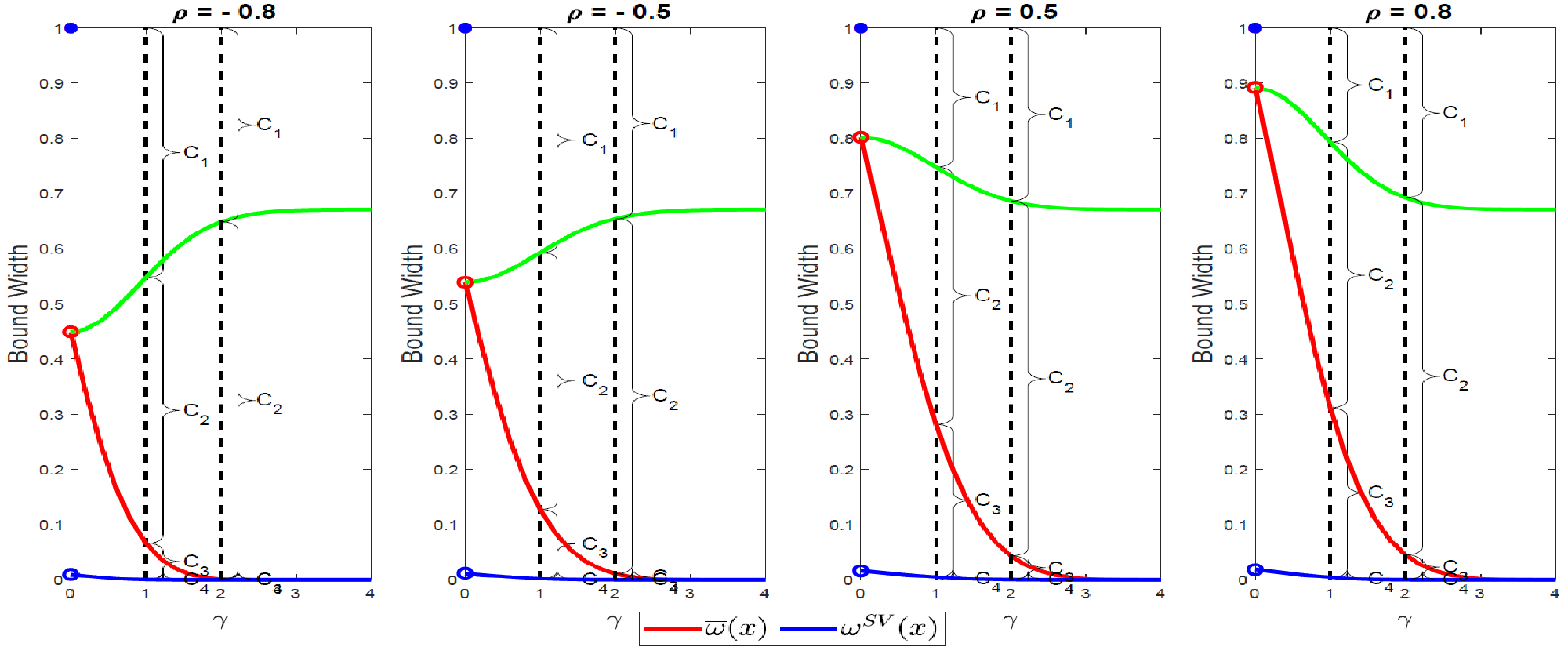}
\end{center}
\vspace{-0.3cm}
\footnotesize{Note: The green line depicts the amount of IV validity contribution $C_1(x)$. To aid legibility $C_1(x),\ldots,C_4(x)$ have been rendered as $C_1,\ldots,C_4$ in each of the subplots in this figure. x-axis displays the values of $\gamma$. For space limitation, we only represent the figure for nonnegative values of $\gamma$.}
\end{figure}

\subsection{IV Identification Power}
\begin{figure}[H]
\begin{center}
\caption{Instrument Identification Power ($x=\mathbb{E}[X]$)}
\vspace{-0.3cm}
\label{fig:R}
\includegraphics[width=0.5\textwidth]{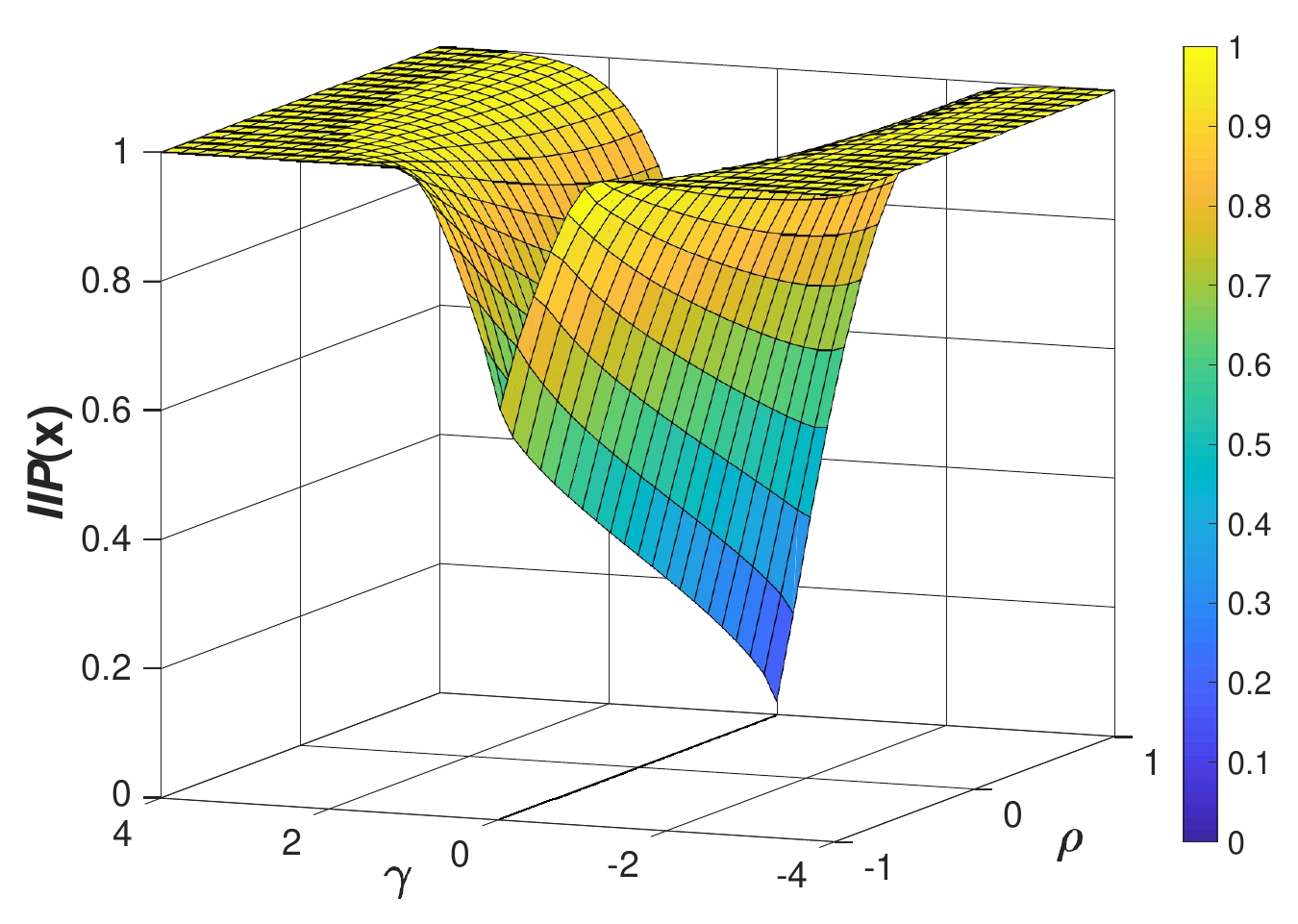}
\end{center}
\vspace{-0.1cm}
\footnotesize{Note: Three dimensional plot of $IIP(x)$ as function of $(\gamma,\rho)$. The value of $\beta$ does not affect the $IIP(x)$ in this case because $\pi=0$ and no matches of Pr$[D=1|x,z]=$Pr$[D=1|x',z']$ exist for $x=\mathbb{E}[X]$ and $z,z'\in\{-1,1\}$. When $\gamma=0$, the $IIP(x)=0$ because IV is irrelevant.}
\end{figure}

Figure \ref{fig:R} depicts the index $IIP(x)$ as a function of $(\gamma,\rho)$. The plot confirms that, firstly, $IIP(x)$ is bigger when IVs are stronger ($|\gamma|$ higher). In addition, for a given IV strength, higher $IIP(x)$ can be achieved if $\rho$ has an opposite sign from the ATE$(x)$ and is of high magnitude ($|\rho|$). If $\rho$ is of the same sign as the ATE$(x)$, then the lower the degree of endogeneity the better the identification power.\footnote{Due to space limitation, numerical results not displayed here show that under the DGP \eqref{model2}, our results regarding how the IV strength and its interaction with the direction and magnitude of endogeneity improve the benchmark Manski bounds also hold in other common ATE bounding analyses, including \citet{heckman2001instrumental} and \citet{chesher2010instrumental}.}

\section{Finite Sample Evaluation and Implications for Practice}\label{sectionER}
In this section, we first present finite samples results on the decomposition and $IIP(x)$. We also illustrate how $IIP(x)$ can be used to compare identification power of alternative IV sets, as there are empirical situations where selection among alternative IVs may be of interest. Second, we discuss practical implications of our results for applied researchers.
\subsection{Finite Sample Evaluation}
We first examine situations when irrelevant, incomplete, or mis-specified IVs are used. We design DGPs with the same set of IVs but with different directions and degrees of endogeneity and different covariant variations, so that the same IV strength can exert different identification powers. Consider i.i.d. samples drawn from DGP in \eqref{model2} with two IVs:
\begin{equation}\label{model3}
\begin{aligned}
Y&=1[\alpha D+\beta X+\varepsilon_1>0],\\
D&=1[\pi X+\gamma_1Z_1+\gamma_2Z_2+\varepsilon_2>0]
\end{aligned}
\end{equation}
where $Z_1$, $Z_2$ and $X$ are mutually independent, and also independent of $(\varepsilon_1,\varepsilon_2)$, $Z_1\sim Bernoulli(1/2)$ and $Z_2\in\{-3,-2,-1,0,1,2,3\}$ with probabilities $(0.1,0.1,0.2,0.2,0.2,0.1,0.1)$. Set $\alpha=1$, $\beta=1$, $\pi=-1$, $(\gamma_1,\gamma_2)=(0.5,0.2)$, and $(\varepsilon_1,\varepsilon_2)$ is jointly normal with mean zero, variance one and correlation $\rho\in\{0.5,0.8\}$. Consider two cases of covariate variability: $X\sim\mathbb{N}(0,1)$ and $X\sim Bernoulli(1/2)$, as in Table \ref{tab:MC_DGP_setting}. We focus on ATE$(x)$ with $x=0$.

\begin{table}[htb!]
\begin{center}
\caption{DGP Designs in Finite Sample Evaluation}\label{tab:MC_DGP_setting}
\setlength{\tabcolsep}{4pt}
\vspace{-0.25cm}
		\begin{tabular}{llcccccc}
			\hline
			\hline
&&&\multicolumn{2}{c}{DGP Designs}&&CPS extremes $(\underline{p}(x),\overline{p}(x))$&ATE$(x)$ ($x=0$)\cr
\hline
\multirow{2}{*}{case 1}&Table \ref{tab:QMLErho0.5case2}&&$X\sim\mathbb{N}(0,1)$&$\rho=0.5$&&\multirow{2}{*}{(0.274,0.864)}&\multirow{2}{*}{0.341}\cr
&Table \ref{tab:QMLErho0.8case2}&&$X\sim\mathbb{N}(0,1)$&$\rho=0.8$\cr
\cr
\multirow{2}{*}{case 2}&Table \ref{tab:QMLErho0.5casebinary}&&$X\sim Bernoulli(1/2)$&$\rho=0.5$&&\multirow{2}{*}{(0.274,0.864)}&\multirow{2}{*}{0.341}\cr
&Table \ref{tab:QMLErho0.8casebinary}&&$X\sim Bernoulli(1/2)$&$\rho=0.8$\cr
\hline
\hline
\end{tabular}
\end{center}
\vspace{-0.3cm}
\end{table}

We introduce two "pseudo" IVs: $\widetilde{Z}_2=1[Z_2>0]$, a misspecified binary IV that only partially reflects $Z_2$, and an irrelevant IV $Z_3\in\{0,1\}$ such that $\text{Pr}[Z_3=1]=2/3$, and $Z_3\perp(\varepsilon_1,\varepsilon_2,Z_1,Z_2,X)$. To evaluate the finite sample performance of $IIP(x)$ and selection of IVs, consider five alternative sets of IVs: (1) only $Z_1$; (2) only $Z_2$; (3) $Z_1$ and misspecified $\widetilde{Z}_2$; (4) $Z_1$ and $Z_2$; and (5) $(Z_1,Z_2)$ and irrelevant $Z_3$.

\begin{table}[htb!]
\begin{center}
\caption{Population CPS Range and $IIP(x)$ ($x=0$, cases 1 and 2)}\label{tab:range}
\vspace{-0.25cm}
		\begin{tabular}{cllccc}
			\hline
			\hline
&IVs&CPS definition&CPS extremes $(\underline{p}(x),\overline{p}(x))$&$IIP(x)$ & $IIP(x)$ \cr
&&&&($\rho=0.5$)&($\rho=0.8$)\cr
\hline
(1)&only $Z_1$&Pr$[D=1|x,Z_1]$&$(0.500,0.682)$&0.305&0.232\cr
(2)&only $Z_2$&Pr$[D=1|x,Z_2]$&$(0.367,0.795)$&0.493&0.443\cr
(3)&$Z_1,\widetilde{Z}_2$&Pr$[D=1|x,Z_1,\widetilde{Z}_2]$&$(0.410,0.799)$&0.456&0.403\cr
(4)&$Z_1,Z_2$&Pr$[D=1|x,Z_1,Z_2]$&$(0.274,0.864)$&0.625&0.594\cr
(5)&$Z_1,Z_2,Z_3$&Pr$[D=1|x,Z_1,Z_2,Z_3]$&$(0.274,0.864)$&0.625&0.594\cr
\hline
\hline
\end{tabular}
\end{center}
\vspace{-0.3cm}\footnotesize
Note: The population CPS and $IIP(x)$ are the same for case 1 and case 2 with difference covariate variability.
\end{table}

Table \ref{tab:range} presents the CPS extremes and $IIP(x)$ of the five IV sets under the true DGP, which are the same for case 1 and 2 because the covariate variability does not impact them. We can see that the $(\underline{p}(x),\overline{p}(x))$ is the widest in (4) when both $Z_1$ and $Z_2$ are used. Adding an irrelevant $Z_3$ does not change the $(\underline{p}(x),\overline{p}(x))$, so theoretically (5) has the same IV strength as (4). The $(\underline{p}(x),\overline{p}(x))$ shrinks when only one valid IV is used as in (1) or (2). As expected, when a valid IV is incorrectly specified as a proxy dummy $\widetilde{Z}_2$ in (3), the $(\underline{p}(x),\overline{p}(x))$ is narrower than that of the best set in (4), but wider than that in (1) with $Z_1$ alone. Interestingly, comparing IV set (3) with (2), set (2) with only one valid IV actually results in wider $(\underline{p}(x),\overline{p}(x))$ than that for the two IVs in set (3) with $Z_2$ misspecified.

Whilst the CPS range indicates the IV strength, it is the $IIP(x)$ that captures the identification power of each IV set, measuring the reduction of SV bound width relative to the benchmark Manski bound width due to the contribution of IVs. As seen from the two $IIP(x)$ columns in Table \ref{tab:range}, the same IV strength can achieve bigger identification gains for $\rho=0.5$ than that with $\rho=0.8$. This is consistent with the results in Section \ref{sectionNA}: as $\rho$ and ATE$(x)$ are both positive in this case, the lower absolute value of $\rho$, the higher the $IIP(x)$ is. For example for IV set (4), the Manski bound width can be reduced by $59.4\%$ by the two IVs when $\rho=0.8$, and it increases to $62.5\%$ if $\rho=0.5$.
For given $\rho$, the equally most powerful IV sets are (4) and (5), and the least powerful set is (1).

We next present the finite sample estimation of the Manski and SV bounds, and conduct the decomposition analysis. Sample size is $n=500,5000,10000$ and replicate $M=1000$ times. Tables \ref{tab:QMLE_normal} to \ref{tab:QMLE_binary} present the sample average (over $M$ replications) of the estimated bounds, $C_{1}(x)$ to $C_{4}(x)$ and $IIP(x)$ of the five IV sets at $x=0$. We use the ``half-median-unbiased estimator'' (HMUE) of the intersecting bounds proposed by \citet*{chernozhukov2013intersection} (hereafter CLR) to estimate the benchmark Manski bounds and the SV bounds. In particular, we employ maximum likelihood estimation (MLE) to estimate the bounding functions and to select the critical values for bias correction according to the simulation-based methodology of CLR.\footnote{We report the HMUE of the Manski bounds, for comparison purpose. Other estimation methods for Manski bounds are also available, see e.g. \citet{imbens2004confidence}. See Appendix \ref{app_CLR} for more details about the CLR method.}

Tables \ref{tab:QMLE_normal} and \ref{tab:QMLE_binary} present results under two different covariate distributions. The first row in each table lists the ATE bounds and decomposition components under the true DGP. We can see that in \textbf{case 1} (Table \ref{tab:QMLE_normal}), where $X$ possesses sufficient variation, the true SV bounds point identify the ATE$(x)$ for both $\rho=0.5$ and $\rho=0.8$. In \textbf{case 2} (Table \ref{tab:QMLE_binary}), the true SV bounds fail to point identify the ATE$(x)$ due to the limited variation in $X$.

Next, we focus on the left part of each table, which displays the HMUEs of the ATE bounds, and the Hausdorff distance between the true and estimated bounds, at $x=0$.\footnote{Simulation results at different values $x$ display similar patterns to those at $x=0$, therefore are not reported due to the space limitation. 
Hausdorff distance has been employed to study convergence properties when a set is the parameter of interest, see e.g. \citet{manski2002inference}.} For all four tables, we can see that the estimated Manski bounds are the same across all five IV sets, always include zero, and have a width a little over one. The estimated SV bounds identify the sign of ATE$(x)$ for all five IV sets. Moreover, the IV sets with greater identification power lead to narrower estimated SV bounds and also improve the estimation accuracy in most of the scenarios. More precisely, the Hausdorff distance of the estimated SV bounds to the true bounds decreases as the IV identification power increases.

Moving to the right part of each of table, first, we note that for each given IV set, all the estimated $C_1(x)$ to $C_4(x)$ and $IIP(x)$ converges to their true values as sample size $n$ increases, 
indicating that the estimated identification gain is more accurate for larger sample size.\footnote{\label{footnote}Because $C_1(x)$ to $C_4(x)$ are functions of $L^M(x)$, $U^M(x)$, $\overline{\omega}(x)$ and $\omega^{SV}(x)$, the estimates of $C_1(x)$ to $C_4(x)$ are computed using the HMUE of the bounds or their widths. We compute $\overline{\omega}(x)$ as the width of the estimated bounds (by HMUE of CLR) $[\underline{L}^{SV}(x),\overline{U}^{SV}(x)]$ in \eqref{svw_p} if ATE$(x)>0$ is identified, or \eqref{svw_n} if otherwise.} We also note that the estimated $C_1(x)$ is the same for different IV sets. This result is intuitive because the identification gains brought by the IV validity should not vary with the IV strength. Comparison of Tables \ref{tab:QMLErho0.5case2} and \ref{tab:QMLErho0.8case2} or Tables \ref{tab:QMLErho0.5casebinary} and \ref{tab:QMLErho0.8casebinary} reveals that the impacts of endogeneity degree on IV identification power can be captured by the estimated $IIP(x)$. Importantly, the true ranking of $IIP(x)$ as in Table \ref{tab:range} can be correctly revealed by finite sample estimates of $IIP(x)$.

\FloatBarrier 
\begin{table}[p!]
\setlength{\tabcolsep}{3.5pt}
\begin{center}
\caption{\textbf{Case 1}. True and Estimated Bounds, and Decomposition of Identification Gains}\label{tab:QMLE_normal} \subcaption{$\rho=0.5,~X\sim\mathbb{N}(0,1),~x=0$}\label{tab:QMLErho0.5case2}
\begin{small}
		\begin{tabular}{clccccccccccc}
			\hline
			\hline
&&\multicolumn{4}{c}{Bounds}&&\multicolumn{5}{c}{Decomposition}\cr
\cline{3-6}\cline{8-12}
\multirow{2}{*}{}&&\multicolumn{2}{c}{Manski}&\multicolumn{2}{c}{SV}&&\cr
&&\footnotesize$[L^M(x),U^M(x)]$&$d_{H}(x)$&\footnotesize$[L^{SV}(x),U^{SV}(x)]$&$d_{H}(x)$&&$C_1(x)$&$C_2(x)$&$C_3(x)$&$C_4(x)$&IIP(x)\cr
\hline
\small True DGP&~~~~$Z_1,Z_2$&$[-0.179,0.821]$&&$[0.341,0.341]$&&&0.179& 0.446& 0.375 & 0.000 & 0.625 \\
\hline
\multirow{5}{*}{\small$n=500$}
&\small (1) only $Z_1$&\multirow{5}{*}{[-0.246,0.899]}&\multirow{5}{*}{0.092}&[0.117,0.775]&0.434&&\multirow{5}{*}{0.246} & 0.186 & 0.056 & 0.658 & 0.432\\
&\small (2) only $Z_2$&&&[0.246,0.562]& 0.227&&& 0.342 & 0.241 & 0.316 & 0.587\\
&\small (3) $Z_1,\widetilde{Z}_2$&&&[0.193,0.759]& 0.418&&& 0.218 & 0.116 & 0.565 & 0.464\\
&\small (4) $Z_1,Z_2$&&&[0.290,0.455]& 0.121&&& 0.436 & 0.298 & 0.165 & 0.682\\
&\small (5) $Z_1,Z_2,Z_3$&&&[0.300,0.451]& 0.116&&& 0.424 & 0.324 & 0.151 & 0.670\\
\hline
\multirow{5}{*}{\small$n=5000$}
&\small (1) only $Z_1$&\multirow{5}{*}{[-0.202,0.846]}&\multirow{5}{*}{0.030}&[0.121,0.768]& 0.427&&\multirow{5}{*}{0.202}& 0.145 & 0.053 & 0.648 & 0.347\\
&\small (2) only $Z_2$&&&[0.266,0.372]& 0.078& && 0.334 & 0.406 & 0.106 & 0.536\\
&\small (3) $Z_1,\widetilde{Z}_2$&&&[0.221,0.757]& 0.416&&& 0.194 & 0.116 & 0.536 & 0.395\\
&\small (4) $Z_1,Z_2$&&&[0.312,0.377]& 0.043&&& 0.446 & 0.335 & 0.066& 0.648\\
&\small (5) $Z_1,Z_2,Z_3$&&&[0.316,0.373]& 0.038&&&0.442 &0.347 & 0.057& 0.644\\
\hline
\multirow{5}{*}{\small$n=10000$}
&\small (1) only $Z_1$&\multirow{5}{*}{[-0.198,0.838]}&\multirow{5}{*}{0.022}&[0.123,0.768]& 0.427&&\multirow{5}{*}{0.198} & 0.139 & 0.054 & 0.645 & 0.337\\
&\small (2) only $Z_2$&&&[0.263,0.363]& 0.080 &&& 0.331 & 0.407 & 0.101 & 0.528\\
&\small (3) $Z_1,\widetilde{Z}_2$&&&[0.225,0.756]& 0.414&&& 0.189 & 0.118 & 0.531& 0.387\\
&\small (4) $Z_1,Z_2$&&&[0.317,0.365]& 0.031&&& 0.444 & 0.346 & 0.048 & 0.642 \\
&\small (5) $Z_1,Z_2,Z_3$&&&[0.320,0.362]& 0.027&&&0.443&0.353&0.042 &0.641 \\
\hline
\hline
\end{tabular}
\end{small}
%
\vspace{0.5cm}
\subcaption{$\rho=0.8,~X\sim\mathbb{N}(0,1),~x=0$}\label{tab:QMLErho0.8case2}
\begin{small}
		\begin{tabular}{clccccccccccc}
			\hline
			\hline
&&\multicolumn{4}{c}{Bounds}&&\multicolumn{5}{c}{Decomposition}\cr
\cline{3-6}\cline{8-12}
\multirow{2}{*}{}&&\multicolumn{2}{c}{Manski}&\multicolumn{2}{c}{SV}&&\cr
&&\footnotesize$[L^M(x),U^M(x)]$&$d_{H}(x)$&\footnotesize$[L^{SV}(x),U^{SV}(x)]$&$d_{H}(x)$&&$C_1(x)$&$C_2(x)$&$C_3(x)$&$C_4(x)$&IIP(x)\cr
\hline
\small True DGP&~~~~$Z_1,Z_2$&$[-0.096,0.904]$&&$[0.341,0.341]$&&&0.096& 0.498 & 0.406 & 0.000 & 0.594\\
\hline
\multirow{5}{*}{\small$n=500$}
&\small (1) only $Z_1$&\multirow{5}{*}{[-0.157,0.996]}&\multirow{5}{*}{0.098}&[0.124,0.873]&0.532&&\multirow{5}{*}{0.157}&0.205&0.041&0.750& 0.362\\
&\small (2) only $Z_2$&&&[0.233,0.559]& 0.229&&&0.382&0.288&0.326&0.539\\
&\small (3) $Z_1,\widetilde{Z}_2$&&&[0.191,0.848]& 0.507&&& 0.246&0.093&0.657&0.403\\
&\small (4) $Z_1,Z_2$&&&[0.291,0.437]& 0.107&&&0.495&0.355&0.146&0.652\\
&\small (5) $Z_1,Z_2,Z_3$&&&[0.298,0.431]&0.100&&&0.482&0.382& 0.133&0.639\\
\hline
\multirow{5}{*}{\small$n=5000$}
&\small (1) only $Z_1$&\multirow{5}{*}{[-0.121,0.924]}&\multirow{5}{*}{0.028}&[0.128,0.860]& 0.519&&\multirow{5}{*}{0.121}&0.149&0.042&0.732&0.271\\
&\small (2) only $Z_2$&&&[0.254,0.357]& 0.088& &&0.346&0.475&0.103&0.467\\
&\small (3) $Z_1,\widetilde{Z}_2$&&&[0.208,0.853]& 0.512&&&0.210&0.068&0.645&0.332 \\
&\small (4) $Z_1,Z_2$&&&[0.312,0.378]& 0.043&&&0.489&0.369&0.066&0.610\\
&\small (5) $Z_1,Z_2,Z_3$&&&[0.315,0.373]& 0.038&&&0.486&0.380&0.058&0.607\\
\hline
\multirow{5}{*}{\small$n=10000$}
&\small (1) only $Z_1$&\multirow{5}{*}{[-0.117,0.918]}&\multirow{5}{*}{0.022} &[0.129,0.860]& 0.519&&\multirow{5}{*}{0.117}&0.146 & 0.042 & 0.731 & 0.263\\
&\small (2) only $Z_2$&&&[0.258,0.357]& 0.083 &&& 0.346 & 0.473 & 0.099 & 0.463\\
&\small (3) $Z_1,\widetilde{Z}_2$&&&[0.212,0.851]& 0.510&&& 0.209& 0.071 & 0.639 & 0.326\\
&\small (4) $Z_1,Z_2$&&&[0.316,0.369]& 0.034&&& 0.491 & 0.374 & 0.053 & 0.607\\
&\small (5) $Z_1,Z_2,Z_3$&&&[0.319,0.365]& 0.030&&& 0.491 & 0.381 & 0.046 & 0.607\\
\hline
\hline
\end{tabular}
\end{small}
\end{center}
{\footnotesize{Note: The estimated bounds, the Hausdorff distance $d_H(x)$ and the decompositions are the averages over 1000 replications.}}
\end{table}

\begin{table}[p]
\setlength{\tabcolsep}{3.5pt}
\begin{center}
\caption{\textbf{Case 2}. True and Estimated Bounds, and Decomposition of Identification Gains}\label{tab:QMLE_binary}
\subcaption{$\rho=0.5,~X\sim Bernoulli(1/2),~x=0$}\label{tab:QMLErho0.5casebinary}
\begin{small}
		\begin{tabular}{clccccccccccc}
			\hline
			\hline
&&\multicolumn{4}{c}{Bounds}&&\multicolumn{5}{c}{Decomposition}\cr
\cline{3-6}\cline{8-12}
\multirow{2}{*}{}&&\multicolumn{2}{c}{Manski}&\multicolumn{2}{c}{SV}&&\cr
&&\footnotesize$[L^M(x),U^M(x)]$&$d_{H}(x)$&\footnotesize$[L^{SV}(x),U^{SV}(x)]$&$d_{H}(x)$&&$C_1(x)$&$C_2(x)$&$C_3(x)$&$C_4(x)$&IIP(x)\cr
\hline
\small True DGP&~~~~$Z_1,Z_2$&$[-0.179,0.821]$&&$[0.283,0.547]$&&&0.179& 0.446 & 0.111 & 0.264& 0.625\\
\hline
\multirow{5}{*}{\small$n=500$}
&\small (1) only $Z_1$&\multirow{5}{*}{[-0.263,0.904]}&\multirow{5}{*}{0.102}&[0.060,0.776]&0.237&&\multirow{5}{*}{0.263} &0.185&0.002&0.716&0.448\\
&\small (2) only $Z_2$&&&[0.110,0.669]& 0.179&&&0.359& 0.009& 0.559& 0.621\\
&\small (3) $Z_1,\widetilde{Z}_2$&&&[0.098,0.769]& 0.224&&& 0.237& 0.000&0.671&0.499\\
&\small (4) $Z_1,Z_2$&&&[0.166,0.647]& 0.131&&&0.439& 0.003& 0.481& 0.701\\
&\small (5) $Z_1,Z_2,Z_3$&&&[0.160,0.656]& 0.140&&&0.433& 0.001& 0.496&0.695\\
\hline
\multirow{5}{*}{\small$n=5000$}
&\small (1) only $Z_1$&\multirow{5}{*}{[-0.206,0.849]}&\multirow{5}{*}{0.034}&[0.068,0.769]& 0.223&&\multirow{5}{*}{0.206}&0.148&0.000&0.701&0.354\\
&\small (2) only $Z_2$&&&[0.135,0.640]& 0.148& &&0.337& 0.010& 0.506& 0.543\\
&\small (3) $Z_1,\widetilde{Z}_2$&&&[0.115,0.754]&0.207&&&0.211& 0.000&0.639& 0.417\\
&\small (4) $Z_1,Z_2$&&&[0.210,0.619]& 0.079&&&0.446&0.002& 0.409& 0.653\\
&\small (5) $Z_1,Z_2,Z_3$&&&[0.208,0.620]& 0.081&&&0.444& 0.002& 0.412&0.650\\
\hline
\multirow{5}{*}{\small$n=10000$}
&\small (1) only $Z_1$&\multirow{5}{*}{[-0.198,0.841]}&\multirow{5}{*}{0.024} &[0.069,0.768]& 0.221&&\multirow{5}{*}{0.198}&0.141&0.001&0.699&0.339\\
&\small (2) only $Z_2$&&&[0.138,0.640]& 0.145 &&&0.333&0.007& 0.502& 0.531\\
&\small (3) $Z_1,\widetilde{Z}_2$&&&[0.118,0.751]& 0.204&&& 0.207& 0.000&0.633&0.406\\
&\small (4) $Z_1,Z_2$&&&[0.216,0.612]& 0.070&&&0.447& 0.000& 0.396& 0.645\\
&\small (5) $Z_1,Z_2,Z_3$&&&[0.217,0.613]& 0.071&&&0.447& 0.002& 0.396&0.645\\
\hline
\hline
\end{tabular}
\end{small}
%
\vspace{0.5cm}
\subcaption{$\rho=0.8,~X\sim Bernoulli(1/2),~x=0$}\label{tab:QMLErho0.8casebinary}
\begin{small}
		\begin{tabular}{clccccccccccc}
			\hline
			\hline
&&\multicolumn{4}{c}{Bounds}&&\multicolumn{5}{c}{Decomposition}\cr
\cline{3-6}\cline{8-12}
\multirow{2}{*}{}&&\multicolumn{2}{c}{Manski}&\multicolumn{2}{c}{SV}&&\cr
&&\footnotesize$[L^M(x),U^M(x)]$&$d_{H}(x)$&\footnotesize$[L^{SV}(x),U^{SV}(x)]$&$d_{H}(x)$&&$C_1(x)$&$C_2(x)$&$C_3(x)$&$C_4(x)$&IIP(x)\cr
\hline
\small True DGP&~~~~$Z_1,Z_2$&$[-0.096,0.904]$&&$[0.319,0.593]$&&&0.096& 0.498 & 0.132 & 0.274 & 0.594\\
\hline
\multirow{5}{*}{\small$n=500$}
&\small (1) only $Z_1$&\multirow{5}{*}{[-0.165,0.972]}&\multirow{5}{*}{0.084}&[0.077,0.868]&0.276&&\multirow{5}{*}{0.165}& 0.183&0.000&0.790&0.348\\
&\small (2) only $Z_2$&&&[0.114,0.751]& 0.212&&&0.330&0.018& 0.637& 0.495\\
&\small (3) $Z_1,\widetilde{Z}_2$&&&[0.133,0.863]& 0.270&&& 0.243& 0.000&0.730&0.408\\
&\small (4) $Z_1,Z_2$&&&[0.209,0.732]& 0.154&&&0.458& 0.008& 0.523& 0.623\\
&\small (5) $Z_1,Z_2,Z_3$&&&[0.200,0.738]& 0.164&&&0.441& 0.009& 0.538&0.606\\
\hline
\multirow{5}{*}{\small$n=5000$}
&\small (1) only $Z_1$&\multirow{5}{*}{[-0.117,0.925]}&\multirow{5}{*}{0.026}&[0.086,0.861]& 0.268&&\multirow{5}{*}{0.117}&0.149&0.001&0.776&0.266\\
&\small (2) only $Z_2$&&&[0.144,0.720]&0.175&&&0.340& 0.010& 0.576& 0.457\\
&\small (3) $Z_1,\widetilde{Z}_2$&&&[0.154,0.848]& 0.256&&& 0.232& 0.000&0.694&0.349\\
&\small (4) $Z_1,Z_2$&&&[0.255,0.694]& 0.102&&&0.486& 0.006& 0.439& 0.603\\
&\small (5) $Z_1,Z_2,Z_3$&&&[0.255,0.696]& 0.105&&&0.483& 0.013& 0.440& 0.600\\
\hline
\multirow{5}{*}{\small$n=10000$}
&\small (1) only $Z_1$&\multirow{5}{*}{[-0.111,0.919]}&\multirow{5}{*}{0.019} &[0.087,0.860]& 0.267&&\multirow{5}{*}{0.111}&0.146&0.000&0.773&0.257\\
&\small (2) only $Z_2$&&&[0.148,0.713]& 0.171 &&&0.338& 0.017& 0.565& 0.450\\
&\small (3) $Z_1,\widetilde{Z}_2$&&&[0.158,0.846]& 0.253&&& 0.230& 0.000&0.688&0.342\\
&\small (4) $Z_1,Z_2$&&&[0.263,0.693]& 0.100&&&0.491&0.001& 0.430& 0.603\\
&\small (5) $Z_1,Z_2,Z_3$&&&[0.263,0.692]& 0.100&&&0.489& 0.003& 0.429& 0.601\\
\hline
\hline
\end{tabular}
\end{small}
\end{center}
{\footnotesize{Note: The estimated bounds, the Hausdorff distance $d_H(x)$ and the decompositions are the averages over 1000 replications.}}
\end{table}

It is interesting to analyze the effect of adding an additional but completely irrelevant IV on the finite sample performance of ATE partial identification, by comparing the results obtained using IV sets (4) and (5). Adding $Z_3$ to $(Z_1,Z_2)$ actually produces a small \textit{decrease} of the estimated $IIP(x)$, on average, for almost all different DGP designs considered in this section. 
The Cramer-Von Mises test and the Kolmogorov–Smirnov test confirm that the average values of the estimates of $IIP(x)$ under scenario (4) are significantly different from those obtained under scenario (5), when sample size is $N=500$ and $N=5000$ for both endogeneity degrees and for both case 1 and case 2. While when sample size is sufficiently large $N=10000$, the estimates of $IIP(x)$ under scenario (4) and (5) are no longer significantly different, except for case 2 with $\rho=0.8$. Particularly, from Table \ref{tab:QMLE_binary} we can see that when $X$ is a binary variable, on average, the estimated SV bounds using $(Z_1,Z_2)$ are narrower than those estimated by the IV set including the irrelevant IV, especially for small sample size. Analyzing the results across the replications in case 2, we find that about 78\% (for both endogeneity degrees) of the replications give narrower estimated SV bounds with IV set $(Z_1,Z_2)$ than those with $(Z_1,Z_2,Z_3)$, for sample size $N=500$; and this rate becomes to 53\% ($\rho=0.5$) and 64\% ($\rho=0.8$) for sufficiently large sample size $N=10000$. This suggests that in finite sample, the loss of information (efficiency) that arises from using irrelevant IV could lead to wider ATE bounds, especially when the covariate possesses limited variation.

On the other hand, the IV irrelevancy cannot always be detected by simply comparing the estimated SV bound width under different IV sets. That is, adding an irrelevant IV in (5) could further shrink the SV bound width when the covariate $X$ is continuous, although the improvement happens at the third decimal and the degree of the improvement decreases as sample size increases.
\footnote{In case 1, the shrinkage of the estimated SV bounds using the irrelevant $Z_3$ is due to the finite sample estimation error. In particular, because the estimates of the coefficient of the irrelevant $Z_3$ will be nonzero with probability one, it results in more matched pairs of $(x,z)$ and $(x',z')$ such that $|$Pr$[D=1|x,z]-$Pr$[D=1|x',z']|<c$ (see Appendix \ref{app_CLR}) especially when covariate is continuous. For case 1 in Table \ref{tab:QMLE_normal}, we find that when sample size is $N=500$, (i) there are 22\% ($\rho=0.5$) and 17\% ($\rho=0.8$) of the 1000 replications where at least one (either lower or upper) estimated SV bound using $(Z_1,Z_2)$ is closer to its true value, compared to that obtained by using the irrelevant IV; and (ii) 12\% of the replications yield wider estimated SV bounds when using the irrelevant IV, for both endogeneity degrees.}

\subsection{Implications for Practice}
Our analysis has several useful implications for policy design, choice of IVs for cost-effective identification targets, and assessing IV identification power in ATE bounding analysis.

First, as greater span of IV-driven propensity scores tightens the ATE bounds, increasing such span by creating IVs with stronger strength can be a consideration in designing policy instruments. For example, random allocation of treatment incentives or vouchers is often used as IV in studies with imperfect compliance. If there is a choice between two alternative IVs for encouraging treatment uptake with similar costs, the more segmented instrument design with both no incentives and high incentives to even a small proportion of participants, which leads to wider $(\underline{p}(x),\overline{p}(x))$, is preferable, compared to the less segmented ones with narrower span of the CPS. In addition, if an IV is continuous, it may not be a good practice to discretize it for simplicity purposes.

Second, for a specific analysis, sometimes the lower ATE bound being above a certain threshold, for example, zero, is more important for making the policy decision, whilst at other times, narrower ATE bounds may be more desirable. A priori knowledge, from economic theory or pilot experiments, on whether the direction of endogeneity is of the same sign as the ATE can be informative on the choice of identification target. When the ATE and the direction of endogeneity have opposite signs, the relatively narrow ATE bounds can be the target, because it is not too demanding of the IV strength and is sufficient to make policy recommendations. In contrast, if they appear to have the same sign, it is possible that increasing IV strength may only have limited scope for reducing ATE bound width. In this case, the researcher may instead want to focus on if the ATE lower bound exceeds a meaningful threshold. The attempt to create or collect stronger IVs in future studies may not be worthwhile, because stronger IVs would not help too much in improving the bounds.

In addition, for a given treatment variable, when there are multiple outcomes of interest, the preferable instrument design and identification target can be different, because the sign of the ATE and the direction of the treatment endogeneity may change with the choice of outcome.

Lastly, our simulation results show that in finite sample, using irrelevant IVs can result in a loss of efficiency and wider ATE bounds, especially when the covariate has limited variation. Our proposed $IIP$ index could be useful as a measure for detecting irrelevancy. Moreover, we reinforce \textit{a-fortiori} the warning that simply adding extra IVs without assessing their relevance is unlikely to be a good practical strategy.

\section{Empirical Application: Women LFP and Childbearing}\label{emp}
In this section, we apply our novel decomposition and IV evaluation method to study the effects of childbearing on women's labor supply. The dataset analyzed here is from the 1980 Census Public Use Micro Samples (PUMS), available at \citet{DVN/4W9GW2_2009}. We follow the data construction in \citet{angrist1998children}, where the sample consists of married women aged 21-35 with two or more children. The dateset contains 254,652 observations.
The outcome is an indicator of being paid for work in the year prior to the census. The treatment is an indicator of having more than two children.

Following \citet[][Table 11]{angrist1998children}, we use as continuous regressors woman's age, woman's age at first birth, and ages of the first two children (quarters), and binary regressors for first child being a boy, second child being a boy, black, hispanic, and other race, as well as the interactions of the above mentioned continuous and indicator variables. For computational simplicity, we reduce dimension of covariates by utilizing the estimated propensity score conditional on $X$, $X_P:=\widehat{Pr}[D=1|X]$ as a covariate, which is estimated via a probit model and $X$ includes all of the regressors mentioned above. Three sets of IVs are considered: (1) the binary indicator that the first two children are the same sex (``\emph{Samesex}''), (2) the binary indicator that the second birth is a twin (``\emph{Twins}''), and (3) both indicators (``\emph{Both}=\{\emph{Samesex},\emph{Twins}\}''). To provide a comparison of SV bounds with other commonly used ATE bounding analyses, we also compute the ATE bounds in \citet{heckman2001instrumental} (HV bounds) and \citet{chesher2010instrumental} (Chesher bounds). To be consistent with our previous numerical analyses in Section \ref{sectionER}, we use the method of CLR to compute all the four bounds of interest.

\begin{table}[ht!]
\begin{center}
\caption{Average of the Estimated Bounds}\label{tab:bounds}
\vspace{-0.35cm}
\subcaption{IV: Samesex}
\centering
		\begin{tabular}{lccccc}
			\hline
			\hline
$$&&Manski&HV&Chesher&SV\cr
\hline
\small HMUE &&[-0.560,0.439]&[-0.537,0.402]&[-0.537,-0.011] $\cup$ [0.011,0.402]&[-0.537,-0.031]\\
\small $90\%$ CI&&[-0.566,0.445]&[-0.546,0.410]&[-0.546,-0.005] $\cup$ [0.005,0.410]&[-0.546,-0.024]\\
\small $95\%$ CI&&[-0.567,0.446]&[-0.547,0.412]&[-0.547,-0.003] $\cup$ [0.003,0.412]&[-0.548,-0.023]\\
\small $99\%$ CI&&[-0.570,0.449]&[-0.551,0.416]&[-0.551,-0.001] $\cup$ [0.001,0.416]&[-0.551,-0.022]\\
			\hline
			\hline
		\end{tabular}
\subcaption{IV: Twins}
\centering
\begin{tabular}{lccccc}
			\hline
			\hline
&&Manski&HV&Chesher&SV\cr
\hline
\small HMUE &&[-0.560,0.439]&[-0.305,0.118]&[-0.305,-0.057]&[-0.182,-0.095]\\
\small $90\%$ CI&&[-0.566,0.445]&[-0.348,0.161]&[-0.348,-0.016]&[-0.263,-0.032]\\
\small $95\%$ CI&&[-0.567,0.446]&[-0.356,0.170]&[-0.356,-0.007]&[-0.276,-0.020]\\
\small $99\%$ CI&&[-0.570,0.449]&[-0.374,0.187]&[-0.374,-0.001]&[-0.300,-0.011]\\
			\hline
			\hline
		\end{tabular}
\subcaption{IV: Both=\{Samesex,Twins\}}
\centering
\begin{tabular}{lccccc}
			\hline
			\hline
&&Manski&HV&Chesher&SV\cr
\hline
\small HMUE&&[-0.560,0.439]&[-0.294,0.100]&[-0.294,-0.064]&[-0.189,-0.103]\\
\small $90\%$ CI&&[-0.566,0.445]&[-0.328,0.134]&[-0.328,-0.030]&[-0.252,-0.048]\\
\small $95\%$ CI&&[-0.567,0.446]&[-0.335,0.142]&[-0.336,-0.022]&[-0.263,-0.038]\\
\small $99\%$ CI&&[-0.570,0.449]&[-0.349,0.156]&[-0.350,-0.008]&[-0.283,-0.026]\\
			\hline
			\hline
		\end{tabular}
\end{center}
\vspace{-0.35cm}
\footnotesize
\begin{tablenotes}
\item Note: The first row of panels (a)-(c) reports the weighted average of the HMUE of the four ATE bounds, and the second to fourth rows report the weighted average of the CLR two-sided CI at different significant levels.
\end{tablenotes}
\end{table}

Table \ref{tab:bounds} reports the weighted average of the HMUE and the CLR two-sided confidence interval (CI) at 90\%, 95\% and 99\% significant level of the four bounds of ATE$(X_P)$, with weights given by the estimated kernel density of $X_P$. Panels (a), (b) and (c) display the results using IV \emph{Samesex}, \emph{Twins} and \emph{Both}, respectively. The weighted average of the Manski bounds estimates in all three panels are essentially identical, since the Manski bounds do not depend on IVs. In all panels, the HV bounds make an improvement over the benchmark Manski bounds, with the HV bound width using \emph{Twins} being narrower than that using \emph{Samesex}, and the HV bound width using \emph{Both} being the narrowest.
The Chesher bounds using \emph{Samesex} fail to identify the sign of the ATE$(X_P)$, as it is a union of both negative and positive intervals. When the IV \emph{Twins} or \emph{Both} is used instead, the weighted average of 95\% CI of the Chesher bounds is $[-0.356,-0.007]$ (using \emph{Twins}) or $[-0.336,-0.022]$ (using \emph{Both}), revealing negative effects of having a third child on women's labor force participation.

For the SV bounds, the results using the IV \emph{Twins} or \emph{Both} dramatically outperform those using \emph{Samesex}. The weighted average of 95\% CIs using \emph{Samesex}, \emph{Twins} and \emph{Both} are $[-0.548,-0.023]$, $[-0.276,-0.020]$ and $[-0.263,-0.038]$, respectively. The SV bounds estimates confirm the negative effect of a third child on women's labor force participation.\footnote{The two-stage least square (2SLS) estimates of \citet[][Table 11]{angrist1998children} give an ATE estimate of -0.123 with 95\% CI of $[-0.178,-0.068]$ using IV \emph{Samesex}, and an estimate of -0.087 with 95\% CI of $[-0.120,-0.054]$ using IV \emph{Twins}. As would be expected, the 95\% two-sided CIs of all four bounds cover the 2SLS estimates and their associated 95\% CIs for both IVs.} To summarize the results above, we can see that for ATE bounds in which the IV plays a key role in extracting identifying information, i.e. HV, Chesher and SV bounds, the IV \emph{Both} gives us the narrowest bounds.

\begin{table}[ht!]
\begin{center}
\caption{Decomposition of Identification Gains and Instrument Identification Power}\label{tab:IG_}
\vspace{-0.35cm}
\subcaption{IV: Samesex}
		\begin{tabular}{lccccc}
			\hline
			\hline
$$&$C_1$&$C_2$&$C_3$&$C_4$&$IIP$\cr
\hline
Based on HMUE&0.439&0.034&0.020&0.507&0.473\\
Based on 90\% CI&0.445&0.025&0.020&0.522&0.471\\
Based on 95\% CI&0.447&0.023&0.020&0.525&0.470\\
Based on 99\% CI&0.449&0.020&0.021&0.530&0.470\\
			\hline
			\hline
		\end{tabular}
\subcaption{IV: Twins}
\begin{tabular}{lccccc}
			\hline
			\hline
&$C_1$&$C_2$&$C_3$&$C_4$&$IIP$\cr
\hline
Based on HMUE&0.439&0.312&0.162&0.086&0.751\\
Based on 90\% CI&0.445&0.235&0.100&0.231&0.681\\
Based on 95\% CI&0.447&0.219&0.094&0.256&0.666\\
Based on 99\% CI&0.449&0.196&0.085&0.290&0.645\\
			\hline
			\hline
		\end{tabular}
\subcaption{IV: Both=\{Samesex,Twins\}}
\begin{tabular}{lccccc}
			\hline
			\hline
&$C_1$&$C_2$&$C_3$&$C_4$&$IIP$\cr
\hline
Based on HMUE&0.439&0.330&0.144&0.086&0.769\\
Based on 90\% CI&0.445&0.268&0.094&0.205&0.713\\
Based on 95\% CI&0.447&0.254&0.088&0.225&0.701\\
Based on 99\% CI&0.449&0.228&0.085&0.257&0.677\\
			\hline
			\hline
		\end{tabular}
\end{center}
\vspace{-0.35cm}
\footnotesize
\begin{tablenotes}
\item Note: $C_1$-$C_4$ and $IIP$ are the weighted average of their associated conditional estimates given $X_P$, with the kernel density of $X_P$ as weights. For both panels (a) to (c), $C_1$ to $C_4$ are computed as described in the footnote \ref{footnote}, and the estimates in each row correspond to different significance levels of the CLR estimation.
\end{tablenotes}
\end{table}
\FloatBarrier

\begin{sidewaysfigure}[htp!]
\centering
\caption{Estimated Bounds of ATE$(x)$}\label{fig:AE1}
\vspace{-0.2cm}
\includegraphics[width=1.05\textwidth]{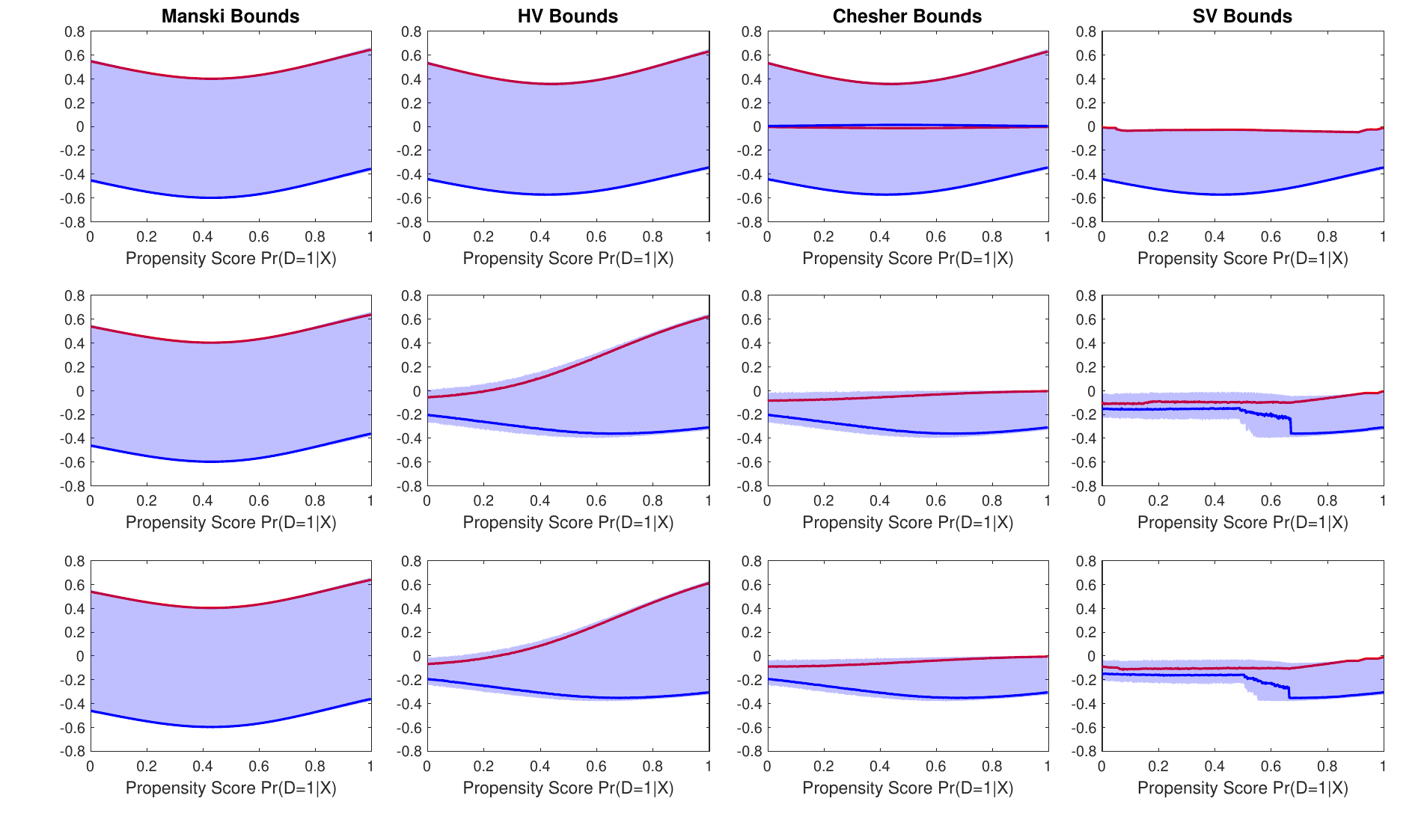}
\vspace{-0.5cm}
\footnotesize
\begin{tablenotes}
\item Note: Panels (a)-(c) plot the estimates ATE$(x)$ as functions of the propensity score $X_P$. The red lines are the upper bounds and blue lines are the lower bounds. The blue shaded area represents the 95\% confidence regions.
\end{tablenotes}
\end{sidewaysfigure}

\begin{sidewaysfigure}[p]
\centering
\caption{Decomposition of Identification Gains}\label{fig:AE2}
\vspace{-0.2cm}
\includegraphics[width=1.05\textwidth]{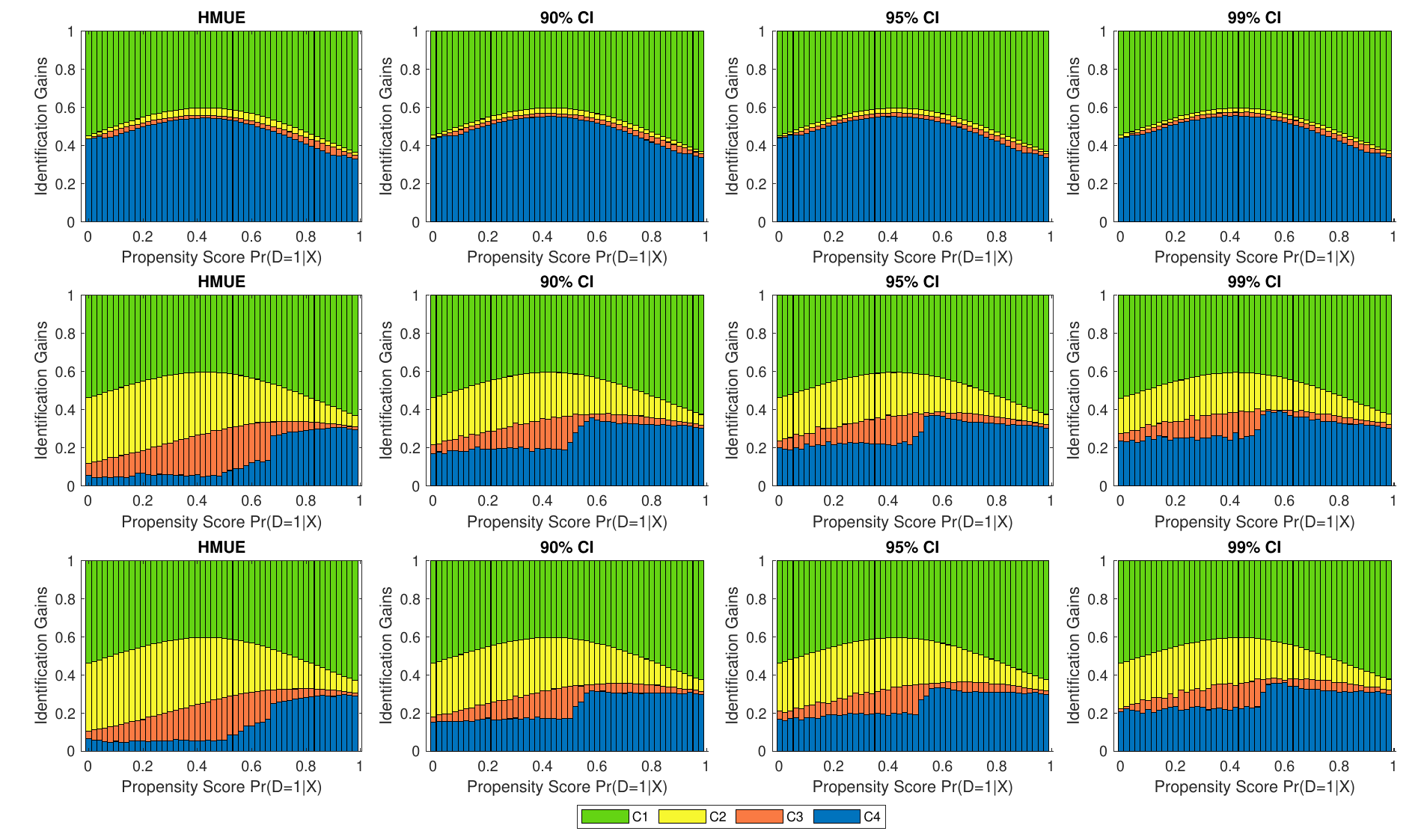}
\vspace{-0.2cm}\footnotesize
\begin{tablenotes}
\item Note: Panels (a)-(c) depict $\hat{C}_j(x)/\hat{\omega}^M(x)$ with $j=1,2,3,4$ against $X_P=\widehat{Pr}(D=1|X)$.
\end{tablenotes}
\end{sidewaysfigure}

The ranking of IV identification power of the three IVs revealed by the discussion above is confirmed by the results in Table \ref{tab:IG_}. The results based on the 95\% CI show that, on average, the identification power of \emph{Twins} (66.6\%) is significantly larger than that of \emph{Samesex} (47.0\%). We can see that whenever \emph{Twins}$=1$ the treatment $D=1$ (i.e. perfect prediction), whereas this is not the case for \emph{Samesex}. It is this feature, of course, that explains the superior performance when the HV, Chesher and SV bounds are evaluated using \emph{Twins} compared to \emph{Samesex}.
When both \emph{Samesex} and \emph{Twins} are used, the identification power of \emph{Both} (70.1\%) exceeds that of either \emph{Samesex} or \emph{Twins}. Closer inspection on the 95\% CIs reveals that any valid IV contributes at least 44.7\% of the SV bounds' improvement relative to the benchmark Manski bounds. Moreover, the identification power of \emph{Twins} is 66.6\%/47.0\%$\approx$1.4 times the identification power of \emph{Samesex} if used individually. In addition, given the contribution of IV validity $C_1$, the extra contribution of \emph{Twins} ($C_2=21.9\%$) via its IV strength to the identification power of \emph{Both} dominates the that of \emph{Samesex} ($C_2=2.3\%$). One remark is that, for the HV and Chesher bounds, IVs with higher $IIP$ clearly lead to narrower bounds.

To explore the heterogeneity of the treatment effects, Figure \ref{fig:AE1} graphs the four bounds of interest against $X_P$.
We can see that when the more powerful of the three IVs are employed, namely \emph{Twins} or \emph{Both}, the HV bounds narrow down the possible range of the ATE$(X_P)$ relative to the benchmark Manski bounds, especially for individuals with a small probability of having a third child. In addition, they can even identify the negative effect for individuals with $X_P$ close to zero. Similar properties are exhibited by the Chesher bounds. The 95\% CI of SV bounds indicate that for women who are less likely to have more than two children, it is more probable that there will be a negative effect on their labor force participation once they have a third child, roughly in the region of -5\% to -20\%. For individuals who are more likely to have more than two children, the effect of having a third child is still negative but with larger possible range, roughly from -5\% to -40\% when $X_P$ is about 0.6, and roughly from 0\% to -30\% when $X_P$ is close to one.

To check the heterogeneity of the IV identification power, Figure \ref{fig:AE2} displays the decompositions plotted against $X_P$. It depicts the fractions of estimated identification gains $\hat C_1(X_P)$ to $\hat C_4(X_P)$ over the estimated Manski bounds width $\hat\omega^M(X_P)$. It is obvious that the IV identification power of \emph{Twins} and \emph{Both} are significantly larger than that of \emph{Samesex}, across all possible values of $X_P$. Furthermore, the contribution of the covariate appears to be amplified when \emph{Twins} is involved in deriving the bounds, leading to a further reduction in the width of the unexplained part relative to the benchmark.

\section{Conclusion}\label{con}
The ATE is a commonly used causal effect measure. Although there have been significant econometric developments, the choice among alternative modelling assumptions and IVs for estimating ATE can be an uninformed one for an empirical researcher without a good understanding of the role of IVs in studying the ATE. This paper aims to illustrate the mechanism through which IVs, and their interactions with treatment endogeneity and exogenous covariates achieve ATE identification gains in a partially identified heterogeneous treatment effect model. Our analysis is carried out by decomposing the identification gains of the ATE bounds in \citet{shaikh2011partial} against the benchmark ATE bounds without IVs in \citet{manski1990nonparametric}. We propose a measure to encapsulate IV identification power, $IIP$, and graphically illustrate the decomposition under alternative scenarios. Our simulation results show that the $IIP$ works well in finite samples as a tool for measuring IV identification power, selecting IVs, and detecting irrelevant ones.

We believe the analysis in this paper brings useful insights to empirical researchers who hope to understand IVs in partially identified models. As an example of these insights, we demonstrate that for the same IV strength, having a high degree of treatment endogeneity that has the opposite sign from that of the ATE can produce greater IV identification power and narrower ATE bounds, while the converse is true if the two have the same sign. In other words, the IV identification power in a partially identified model for binary dependent variables can be very different from the conventional IV strength measure that is only based on the treatment equation. We also discuss practical implications for policy instrument designs after obtaining data from preliminary trials. The analysis in our paper sheds new light on and offers a potential criterion for IV selection in high-dimensional settings within partially identified models. It also raises new questions as to what constitutes an adequate definition of weak IVs and testing for IV validity in conjunction with ATE bounding analyses. Explorations of these issues are left for future research.

\newpage
{\setstretch{1.1}
\bibliography{reference_PID}
}

\newpage
\appendix
\setstretch{1}
\section{Appendix}
Throughout the appendix, $P=P(X,Z)=\text{Pr}[D=1|X,Z]$ with support $\Omega_P$ and $P(x,z)=\text{Pr}[D=1|x,z]$. Let $F$ and $f$ denote the cumulative distribution function and the density function, respectively.
\subsection{Extra Definitions}\label{app_def}
Let $P$ and $P'$ be two independent random variables with the same distribution. For any $x,x'\in\Omega_{X}$, define $H(x,x')=\mathbb{E}[h(x,x',P,P')|P>P']$ where
\begin{align*}
h(x,x',p,p')=&\text{Pr}[Y=1,D=1|x',p]-\text{Pr}[Y=1,D=1|x',p']\\
&~~~~~~~~~~~~-\text{Pr}[Y=1,D=0|x,p']+\text{Pr}[Y=1,D=0|x,p].
\end{align*}
Define $\textbf{X}_{0+}(x)=\{x':H(x,x')\geq0\}$, $\textbf{X}_{0-}(x)=\{x':H(x,x')\leq0\}$, $\textbf{X}_{1+}(x)=\{x':H(x',x)\geq0\}$, and $\textbf{X}_{1-}(x)=\{x':H(x',x)\leq0\}$.

\subsection{The CLR Method}\label{app_CLR}
The CLR half-median-unbiased estimator produces a upper (lower) bound estimator that exceeds (fails below) its true value, each with probability at least a half asymptotically.

Theoretically, the construction of the SV bounds requires the matching of pairs $(x,z)$ and $(x',z')$ such that Pr$[D=1|x,z]=$Pr$[D=1|x',z']$. In practice, it is hard to find such pairs with equal CPS especially when the variation of covariates is limited. In the simulations, the SV bounds are computed by matching $(x,z)$ and $(x',z')$ such that $|\text{Pr}[D=1|x,z]-\text{Pr}[D=1|x',z']|<c$ and $c=1\%$. Although the estimated SV bounds depend on $c$, the estimated $IIP(x)$ does not. Therefore the choice of $c$ has no impacts on the performance of the $IIP(x)$.

\subsection{Lemmas}
\begin{lemma}\label{lemma1}Under Assumption \ref{SVR} (a) and (b), for any $p,p'\in\Omega_{P|x}$ such that $p>p'$, we have
\begin{align*}
&\text{Pr}[D=0|x,p]+\text{Pr}[Y=y,D=1|x,p]-\{\text{Pr}[D=0|x,p']+\text{Pr}[Y=y,D=1|x,p']\}\leq0,\nonumber\\
&\text{Pr}[D=1|x,p]+\text{Pr}[Y=y,D=0|x,p]-\{\text{Pr}[D=1|x,p']+\text{Pr}[Y=y,D=0|x,p']\}\geq0,\nonumber
\end{align*}
for $y\in\{0,1\}$. In addition, \begin{align*}
&\text{Pr}[Y=y,D=1|x,p]-\text{Pr}[Y=y,D=1|x,p']\geq0,\\
&\text{Pr}[Y=y,D=0|x,p]-\text{Pr}[Y=y,D=0|x,p']\leq0,
\end{align*}
for $y\in\{0,1\}$. Lastly, if $\nu_1(1,x)>\nu_1(0,x)$ given $x\in\Omega_X$, then
$\text{Pr}[Y=1|x,p]-\text{Pr}[Y=1|x,p']\geq0$.
If $\nu_1(1,x)\leq\nu_1(0,x)$ given $x\in\Omega_X$, then
$\text{Pr}[Y=1|x,p]-\text{Pr}[Y=1|x,p']\leq0.$
Strict inequalities hold if Assumption \ref{SVR} (e) is imposed on the DGP.
\end{lemma}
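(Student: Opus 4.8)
The plan is to exploit the threshold-crossing structure together with the independence condition (b) to rewrite every conditional joint probability in the statement as a value of the joint CDF $F_{\varepsilon_1,\varepsilon_2}$ evaluated at thresholds fixed by $x$ and $p$. First I would observe that, by (b), $p(x,z)=\text{Pr}[D=1\mid x,z]=\text{Pr}[\varepsilon_2<\nu_2(x,z)]=F_{\varepsilon_2}(\nu_2(x,z))$, and that (a) makes $F_{\varepsilon_2}$ strictly increasing, so the map $p\mapsto t_p:=F_{\varepsilon_2}^{-1}(p)$ is a well-defined increasing bijection onto the relevant range. Hence conditioning on $(X,P)=(x,p)$ is equivalent to fixing the treatment index at $\nu_2(x,z)=t_p$, and since $(\varepsilon_1,\varepsilon_2)$ is independent of $(X,Z)$ the conditional law of $(\varepsilon_1,\varepsilon_2)$ given $(x,p)$ remains $F_{\varepsilon_1,\varepsilon_2}$. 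Writing $s_1:=\nu_1(1,x)$ and $s_0:=\nu_1(0,x)$, this yields closed forms such as $\text{Pr}[Y=1,D=1\mid x,p]=F_{\varepsilon_1,\varepsilon_2}(s_1,t_p)$, $\text{Pr}[Y=1,D=0\mid x,p]=F_{\varepsilon_1}(s_0)-F_{\varepsilon_1,\varepsilon_2}(s_0,t_p)$, $\text{Pr}[Y=0,D=1\mid x,p]=p-F_{\varepsilon_1,\varepsilon_2}(s_1,t_p)$, together with $\text{Pr}[D=1\mid x,p]=p$; crucially only the second argument $t_p$ varies with $p$, while $s_1,s_0$ are held fixed by $x$.

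The second step is to note that, because $p>p'$ forces $t_p>t_{p'}$, every difference across $p$ and $p'$ collapses to the mass of a horizontal strip. Concretely, for any threshold $s$, $F_{\varepsilon_1,\varepsilon_2}(s,t_p)-F_{\varepsilon_1,\varepsilon_2}(s,t_{p'})=\text{Pr}[\varepsilon_1<s,\;t_{p'}\le\varepsilon_2<t_p]$, which is nonnegative and bounded above by the full strip mass $\text{Pr}[t_{p'}\le\varepsilon_2<t_p]=p-p'$. Substituting the closed forms, the first pair of inequalities reduces to expressions of the form $-(p-p')+(\text{a strip mass with an }\varepsilon_1\text{ restriction})$ and $(p-p')-(\text{such a strip mass})$, while the two monotonicity-type inequalities reduce to bare strip masses. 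In every case the sign is immediate from the two elementary facts that the restricted strip mass is nonnegative and at most $p-p'$. I would verify all four displayed inequalities and the two monotonicity statements by this substitution, treating $y\in\{0,1\}$ in parallel, since the only effect of $y$ is to toggle whether a term appears as $F_{\varepsilon_1,\varepsilon_2}(s_1,\cdot)$ or as its complement against $F_{\varepsilon_2}(t_p)=p$.

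For the final claim on the sign of $\text{Pr}[Y=1\mid x,p]-\text{Pr}[Y=1\mid x,p']$, I would write $\text{Pr}[Y=1\mid x,p]=F_{\varepsilon_1,\varepsilon_2}(s_1,t_p)+F_{\varepsilon_1}(s_0)-F_{\varepsilon_1,\varepsilon_2}(s_0,t_p)$, so the difference equals $\text{Pr}[\varepsilon_1<s_1,\,t_{p'}\le\varepsilon_2<t_p]-\text{Pr}[\varepsilon_1<s_0,\,t_{p'}\le\varepsilon_2<t_p]$, the difference of two nested strip masses. When $\nu_1(1,x)>\nu_1(0,x)$ we have $s_1>s_0$, so $\{\varepsilon_1<s_0\}\subseteq\{\varepsilon_1<s_1\}$ and the difference equals the mass of $\{s_0\le\varepsilon_1<s_1,\,t_{p'}\le\varepsilon_2<t_p\}\ge0$; the reverse inclusion handles $\nu_1(1,x)\le\nu_1(0,x)$.

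Finally, the strict inequalities under (c) follow because (c) guarantees a genuinely non-degenerate $\nu_2(X,Z)\mid X$, so one may take $p>p'$ with $t_p>t_{p'}$ strictly, and (a) provides a strictly positive density, whence every rectangle of positive side lengths carries positive mass; the relevant strip or band masses are therefore strictly positive. The only real subtlety, and the step I would be most careful about, is the reparametrization in the first paragraph: justifying that conditioning on $(x,p)$ coincides with fixing $\nu_2(x,z)=t_p$ and hence leaves the conditional error law equal to $F_{\varepsilon_1,\varepsilon_2}$, which is precisely the index-sufficiency content of (a)--(b). Everything downstream is elementary monotonicity of a probability measure.
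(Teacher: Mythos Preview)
Your proposal is correct and follows essentially the same approach as the paper's own proof. The paper writes the key events in the form $\{p'\le F_{\varepsilon_2}(\varepsilon_2)<p\}$ rather than via your explicit inverse $t_p=F_{\varepsilon_2}^{-1}(p)$, but the reduction of each difference to a ``strip mass'' of the joint error distribution and the ensuing sign arguments are identical.
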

\begin{proof}[Proof of Lemma \ref{lemma1}]Under Assumption \ref{SVR} (a) and (b), for $p,p'\in\Omega_{P|x}$ with $p>p'$, we have
\begin{align*}
&\text{Pr}[D=0|x,p]+\text{Pr}[Y=1,D=1|x,p]-\{\text{Pr}[D=0|x,p']+\text{Pr}[Y=1,D=1|x,p']\}\\
=&\text{Pr}[\varepsilon_1<\nu_1(1,x),p'\leq F_{\varepsilon_2}(\varepsilon_2)<p]-\text{Pr}[p'\leq F_{\varepsilon_2}(\varepsilon_2)<p]\\
=&-\text{Pr}[\varepsilon_1\geq\nu_1(1,x),p'\leq F_{\varepsilon_2}(\varepsilon_2)<p]\\
\leq&0.
\end{align*}
Similar manipulations show that
\begin{align*}
&\text{Pr}[D=0|x,p]+\text{Pr}[Y=0,D=1|x,p]-\{\text{Pr}[D=0|x,p']+\text{Pr}[Y=0,D=1|x,p']\}
\leq0,\\
&\text{Pr}[D=1|x,p]+\text{Pr}[Y=1,D=0|x,p]-\{\text{Pr}[D=1|x,p']+\text{Pr}[Y=1,D=0|x,p']\}
\geq0,~\text{and}\\
&\text{Pr}[D=1|x,p]+\text{Pr}[Y=0,D=0|x,p]-\{\text{Pr}[D=1|x,p']+\text{Pr}[Y=0,D=0|x,p']\}
\geq0.
\end{align*}
In addition, using relatively straightforward if somewhat tedious algebra, we can obtain the following inequalities
\begin{align*}
&\text{Pr}[Y=0,D=1|x,p]-\text{Pr}[Y=0,D=1|x,p']
=\text{Pr}[\varepsilon_1\geq\nu_1(1,x),p'\leq F_{\varepsilon_2}(\varepsilon_2)<p]
\geq0,\\
&\text{Pr}[Y=1,D=1|x,p]-\text{Pr}[Y=1,D=1|x,p']
=\text{Pr}[\varepsilon_1<\nu_1(1,x),p'\leq F_{\varepsilon_2}(\varepsilon_2)<p]
\geq0,\\
&\text{Pr}[Y=0,D=0|x,p]-\text{Pr}[Y=0,D=0|x,p']
=-\text{Pr}[\varepsilon_1\geq\nu_1(0,x),p'\leq F_{\varepsilon_2}(\varepsilon_2)<p]
\leq0,~\text{and}\\
&\text{Pr}[Y=1,D=0|x,p]-\text{Pr}[Y=1,D=0|x,p']
=-\text{Pr}[\varepsilon_1<\nu_1(0,x),p'\leq F_{\varepsilon_2}(\varepsilon_2)<p]
\leq0.
\end{align*}
Now suppose that $\nu_1(1,x)>\nu_1(0,x)$ given $x\in\Omega_X$. Then it follows that
\begin{align*}
&\text{Pr}[Y=1|x,p]-\text{Pr}[Y=1|x,p']\nonumber\\
=&\text{Pr}[Y=1,D=1|x,p]+\text{Pr}[Y=1,D=0|x,p]-\text{Pr}[Y=1,D=1|x,p']-\text{Pr}[Y=1,D=0|x,p']\nonumber\\
=&\text{Pr}[\varepsilon_1<\nu_1(1,x),p'\leq F_{\varepsilon_2}(\varepsilon_2)<p]-\text{Pr}[\varepsilon_1<\nu_1(0,x),p'\leq F_{\varepsilon_2}(\varepsilon_2)<p]\nonumber\\
=&\text{Pr}[\nu_1(0,x)\leq \varepsilon_1<\nu_1(1,x),p'\leq F_{\varepsilon_2}(\varepsilon_2)<p]\nonumber\\
\geq&0.
\end{align*}
Finally, using a parallel argument in the case where $\nu_1(1,x)\leq\nu_1(0,x)$ given $x\in\Omega_X$, we can conclude that the inequalities stated in the lemma hold.
\end{proof}

\begin{lemma}\label{lemma2}Under Assumptions \ref{SVR} and \ref{ass2_sv}, joint probabilities Pr$[Y=y,D=d|x,p]$ for $y,d\in\{0,1\}$ are functions of the dependence parameter $\rho$. In addition,
\begin{itemize}
  \item[(a)]$\text{Pr}[Y=1,D=1|x,p]$ and $\text{Pr}[Y=0,D=0|x,p]$ are weakly increasing in $\rho$;
  \item[(b)]$\text{Pr}[Y=1,D=0|x,p]$ and $\text{Pr}[Y=0,D=1|x,p]$ are weakly decreasing in $\rho$.
\end{itemize}
\end{lemma}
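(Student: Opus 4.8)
The plan is to derive a closed-form expression for each of the four joint probabilities in terms of the copula $C(\cdot,\cdot;\rho)$ and then to read off the monotonicity in $\rho$ directly from the concordant ordering in \eqref{CO}, so that no differentiation or specification of the copula is required.

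First I would translate the conditioning event $\{X=x,\,P=p\}$ into a statement about the latent thresholds. Since $p=\text{Pr}[D=1|x,z]=\text{Pr}[\varepsilon_2<\nu_2(x,z)]=F_{\varepsilon_2}(\nu_2(x,z))$ and $F_{\varepsilon_2}$ is strictly increasing by Assumption \ref{SVR}(a), conditioning on $(x,p)$ fixes the treatment threshold at $\nu_2(x,z)=F_{\varepsilon_2}^{-1}(p)$. By the independence in Assumption \ref{SVR}(b), the conditional law of $(\varepsilon_1,\varepsilon_2)$ given $(X,Z)$ equals its unconditional law, so under Assumption \ref{ass2_sv} the relevant joint distribution is governed by the copula. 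This step—correctly passing from conditioning on the propensity score to fixing the $\varepsilon_2$-quantile and invoking independence so that the copula applies—is where the care is needed; the remainder is bookkeeping with survival functions.

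Next I would write out the four probabilities. Using $Y=1[\varepsilon_1<\nu_1(D,x)]$ and $D=1[\varepsilon_2<\nu_2(x,z)]$, together with the survival-copula identity $\text{Pr}[\varepsilon_1\ge a,\varepsilon_2\ge b]=1-F_{\varepsilon_1}(a)-F_{\varepsilon_2}(b)+C(F_{\varepsilon_1}(a),F_{\varepsilon_2}(b);\rho)$ and the marginal identity $\text{Pr}[\varepsilon_1<a,\varepsilon_2\ge b]=F_{\varepsilon_1}(a)-C(F_{\varepsilon_1}(a),F_{\varepsilon_2}(b);\rho)$, I would obtain
\begin{align*}
\text{Pr}[Y=1,D=1|x,p]&=C\big(F_{\varepsilon_1}(\nu_1(1,x)),\,p;\,\rho\big),\\
\text{Pr}[Y=0,D=1|x,p]&=p-C\big(F_{\varepsilon_1}(\nu_1(1,x)),\,p;\,\rho\big),\\
\text{Pr}[Y=1,D=0|x,p]&=F_{\varepsilon_1}(\nu_1(0,x))-C\big(F_{\varepsilon_1}(\nu_1(0,x)),\,p;\,\rho\big),\\
\text{Pr}[Y=0,D=0|x,p]&=1-F_{\varepsilon_1}(\nu_1(0,x))-p+C\big(F_{\varepsilon_1}(\nu_1(0,x)),\,p;\,\rho\big).
\end{align*}
In each case the only term depending on $\rho$ is $\pm C(\cdot,p;\rho)$, entering with a plus sign for $(Y=1,D=1)$ and $(Y=0,D=0)$ and with a minus sign for $(Y=0,D=1)$ and $(Y=1,D=0)$.

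Finally, Assumption \ref{ass2_sv} imposes the concordant ordering \eqref{CO}, that is $C(u_1,u_2;\rho_1)\le C(u_1,u_2;\rho_2)$ whenever $\rho_1<\rho_2$, so the map $\rho\mapsto C(\cdot,\cdot;\rho)$ is weakly increasing at every fixed argument. Reading off the signs above then yields the claim immediately: the probabilities carrying $+C$, namely $\text{Pr}[Y=1,D=1|x,p]$ and $\text{Pr}[Y=0,D=0|x,p]$, are weakly increasing in $\rho$, establishing part (a); while those carrying $-C$, namely $\text{Pr}[Y=1,D=0|x,p]$ and $\text{Pr}[Y=0,D=1|x,p]$, are weakly decreasing in $\rho$, establishing part (b). The fact that the joint probabilities are functions of $\rho$ at all is evident from the same expressions, since $\rho$ enters only through $C$.
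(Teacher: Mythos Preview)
Your proposal is correct and follows essentially the same approach as the paper: express each joint probability in terms of the copula $C(\cdot,\cdot;\rho)$ and read off the monotonicity from the concordant ordering in \eqref{CO}. Your write-up is in fact slightly more explicit, giving all four closed-form expressions and spelling out why conditioning on $(x,p)$ fixes the $\varepsilon_2$-quantile, whereas the paper derives three formulas and infers the fourth by complementation.
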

\begin{proof}[Proof of Lemma \ref{lemma2}]
For any given $p\in\Omega_P$,
\begin{align}\label{p17}
\text{Pr}[Y=1,D=1|x,p]&=\text{Pr}[\varepsilon_1<\nu_1(1,x),F_{\varepsilon_2}(\varepsilon_2)<p|X=x,P=p]\nonumber\\
&=\text{Pr}[\varepsilon_1<\nu_1(1,x),F_{\varepsilon_2}(\varepsilon_2)<p]\nonumber\\
&=C(F_{\varepsilon_1}(\nu_1(1,x)),p;\rho).
\end{align}
Because the copula $C(\cdot,\cdot;\rho)$ satisfies the concordant ordering with respect to $\rho$, we know that $\text{Pr}[Y=1,D=1|x,p]$ is weakly increasing in $\rho$. Since
\begin{align*}
\text{Pr}[Y=0,D=1|x,p]&=\text{Pr}[D=1|x,p]-\text{Pr}[Y=1,D=1|x,p]=p-C(F_{\varepsilon_1}(\nu_1(1,x)),p;\rho),
\end{align*}
and thus, $\text{Pr}[Y=0,D=1|x,p]$ is weakly decreasing in $\rho$ and $\text{Pr}[Y=1,D=1|x,p]$ is weakly increasing in $\rho$. In addition,
\begin{align}\label{p18}
\text{Pr}[Y=0,D=0|x,p]=&\text{Pr}[\varepsilon_1\geq\nu_1(0,x),F_{\varepsilon_2}(\varepsilon_2)\geq p|x,p]\nonumber\\
=&\text{Pr}[\varepsilon_1\geq\nu_1(0,x),F_{\varepsilon_2}(\varepsilon_2)\geq p]\nonumber\\
=&\text{Pr}[\varepsilon_1\geq\nu_1(0,x)]-\text{Pr}[F_{\varepsilon_2}(\varepsilon_2)< p]+\text{Pr}[\varepsilon_1<\nu_1(0,x),F_{\varepsilon_2}(\varepsilon_2)< p]\nonumber\\
=&1-F_{\varepsilon_1}(\nu_1(0,x))-p+C(F_{\varepsilon_1}(\nu_1(0,x)),p;\rho).
\end{align}
From \eqref{p18} we can see that $\text{Pr}[Y=0,D=0|x,p]$ is weakly increasing in $\rho$, which immediately implies that $\text{Pr}[Y=1,D=0|x,p]$ is weakly decreasing in $\rho$.
\end{proof}

\subsection{Proofs}
\begin{proof}[Proof of Proposition \ref{sv1}] To begin, let us first introduce the following notation:
\begin{equation}\label{four_bounds}
\begin{aligned}
L_0(x,p)&=\text{Pr}[Y=1,D=0|x,p]+\sup_{x'\in\textbf{X}_{0-}(x)}\text{Pr}[Y=1,D=1|x',p],\\
L_1(x,p)&=\text{Pr}[Y=1,D=1|x,p]+\sup_{x'\in \textbf{X}_{1+}(x)}\text{Pr}[Y=1,D=0|x',p],\\
U_0(x,p)&=\text{Pr}[Y=1,D=0|x,p]+p\inf_{x'\in\textbf{X}_{0+}(x)}\text{Pr}[Y=1|x',p,D=1],\\
U_1(x,p)&=\text{Pr}[Y=1,D=1|x,p]+(1-p)\inf_{x'\in\textbf{X}_{1-}(x)}\text{Pr}[Y=1|x',p,D=0].
\end{aligned}
\end{equation}
Then, Theorem 2.1 in \citet[][]{shaikh2011partial} shows that the SV bounds are
\begin{align}
L^{SV}(x)&=L_1(x,\overline{p})-U_0(x,\underline{p})~\text{and}~
U^{SV}(x)=U_1(x,\overline{p})-L_0(x,\underline{p}).
\end{align}

Next we show that $L_0(x,p)$ is weakly decreasing in $p$ (\textit{ceteris paribus}). Under Assumption \ref{SVR} and $\Omega_{X,P}=\Omega_X\times\Omega_P$, for $\forall x\in\Omega_X$ there exists $x^l_0\in\textbf{X}_{0-}(x)$, such that $\nu_1(1,x^l_0)=\sup_{x\in\textbf{X}_{0-}(x)}\nu_1(1,x)$ and
\begin{align*}
L_0(x,p)=\text{Pr}[Y=1,D=0|x,p]+\text{Pr}[Y=1,D=1|x^l_0,p],
\end{align*}
\citep[For detailed particulars see the proof of][Theorem 2.1 (ii)\footnote{The proof is contained in the supplementary material of \citet{shaikh2011partial}.}]{shaikh2011partial}. For $p,p'\in\Omega_P$ and $p'<p$, we have now have
\begin{align}\label{p10}
&L_0(x,p)-L_0(x,p')\nonumber\\
=&\text{Pr}[Y=1,D=0|x,p]+\text{Pr}[Y=1,D=1|x^l_0,p]-\text{Pr}[Y=1,D=0|x,p']-\text{Pr}[Y=1,D=1|x^l_0,p']\nonumber\\
=&\text{Pr}[\varepsilon_1\leq\nu_1(1,x_0^l),p'<\varepsilon_2\leq p)-\text{Pr}[\varepsilon_1\leq\nu_1(0,x),p'<\varepsilon_2\leq p)\nonumber\\
=&\text{Pr}[\nu_1(0,x)<\varepsilon_1\leq\nu_1(1,x_0^l),p'<\varepsilon_2\leq p)\nonumber\\
\leq&0,
\end{align}
where the last inequality follows because $x^l_0\in\textbf{X}_{0-}(x)$, and the Lemma 2.1 in \citet{shaikh2011partial} shows that $x^l_0\in\textbf{X}_{0-}(x)$ implies $\nu_1(1,x^l_0)\leq\nu_1(0,x)$. Thus, from \eqref{p10}, $L_0(x,p)$ is weakly decreasing in $p$.

Similar arguments show that $L_1(x,p)$ is weakly increasing in $p$, $U_0(x,p)$ is weakly increasing in $p$, and $U_1(x,p)$ is weakly decreasing in $p$.
Hence $L^{SV}(x)$ is weakly increasing in $\overline{p}$ and $U^{SV}(x)$ is weakly decreasing in $\overline{p}$. On the other hand, $L^{SV}(x)$ is weakly decreasing in $\underline{p}$ and $U^{SV}(x)$ is weakly increasing in $\underline{p}$. This completes the proof of the proposition.
\end{proof}\\

\begin{proof}[Proof of Proposition \ref{svw}]Consider the case ATE$(x)>0$.
Under Assumption \ref{SVR}, from the definitions of $\textbf{X}_{0+}(x)$, $\textbf{X}_{0-}(x)$, $\textbf{X}_{1+}(x)$ and $\textbf{X}_{1-}(x)$, we know that $\textbf{X}_{0+}(x)$ and $\textbf{X}_{1+}(x)$ are nonempty for $\forall x\in\Omega_X$, since $x$ itself belongs to these two sets. While, $\textbf{X}_{0-}(x)$ and $\textbf{X}_{1-}(x)$ may be empty for some $x\in\Omega_X$. Recall that the supremum and infimum are defined as zero and one over an empty set, respectively. Thus, for the four functions defined in \eqref{four_bounds}, we have
\begin{align}
L_0(x,p)&\geq\text{Pr}[Y=1,D=0|x,p],\nonumber\\
L_1(x,p)&\geq\text{Pr}[Y=1|x,p],\nonumber\\
U_0(x,p)&\leq\text{Pr}[Y=1|x,p],~\text{and}\nonumber\\
U_1(x,p)&\leq\text{Pr}[Y=1,D=1|x,p]+\text{Pr}[D=0|x,p].\nonumber
\end{align}
Therefore, $[L^{SV}(x),U^{SV}(x)]\subset[\underline{L}^{SV}(x),\overline{U}^{SV}(x)]$, where, by Lemma \ref{lemma1}
\begin{equation}\label{svw_p}
\begin{aligned}
\underline{L}^{SV}(x)&=\sup_{p\in\Omega_{P|x}}\text{Pr}[Y=1|x,p]-\inf_{p\in\Omega_{P|x}}\text{Pr}[Y=1|x,p]=\text{Pr}[Y=1|x,\overline{p}(x)]-\text{Pr}[Y=1|x,\underline{p}(x)],\\
\;\\
\overline{U}^{SV}(x)&=\inf_{p\in\Omega_{P|x}}\{\text{Pr}[Y=1,D=1|x,p]+\text{Pr}[D=0|x,p]\}-\sup_{p\in\Omega_{P|x}}\text{Pr}[Y=1,D=0|x,p]\\
&=\text{Pr}[Y=1,D=1|x,\overline{p}(x)]+\text{Pr}[D=0|x,\overline{p}(x)]-\text{Pr}[Y=1,D=0|x,\underline{p}(x)]\end{aligned}
\end{equation}
and it follows that
\begin{align}\label{widest1}
\overline{\omega}(x)=&\text{Pr}[Y=1,D=1|x,\underline{p}(x)]+\text{Pr}[Y=0,D=0|x,\overline{p}(x)].
\end{align}

Now consider the case where ATE$(x)<0$. In contrast to the positive ATE$(x)$ case, $\textbf{X}_{0-}(x)$ and $\textbf{X}_{1-}(x)$ are nonempty for $\forall x\in\Omega_X$ since $x$ itself belongs  to these two sets, while $\textbf{X}_{0+}(x)$ and $\textbf{X}_{1+}(x)$ may be empty for some $x\in\Omega_X$. Thus, the following inequalities hold
\begin{align}
L_0(x,p)&\geq\text{Pr}[Y=1|x,p],\nonumber\\L_1(x,p)&\geq\text{Pr}[Y=1,D=1|x,p],\nonumber\\
U_0(x,p)&\leq\text{Pr}[Y=1,D=0|x,p]+\text{Pr}[D=1|x,p],~\text{and}\nonumber\\U_1(x,p)&\leq\text{Pr}[Y=1|x,p],\nonumber
\end{align}
based on which we can obtain $[L^{SV}(x),U^{SV}(x)]\subset[\underline{L}^{SV}(x),\overline{U}^{SV}(x)]$, where by Lemma \ref{lemma1}
\begin{equation}\label{svw_n}
\begin{aligned}
\overline{U}^{SV}(x)&=\inf_{p\in\Omega_{P|x}}\text{Pr}[Y=1|x,p]-\sup_{p\in\Omega_{P|x}}\text{Pr}[Y=1|x,p]=\text{Pr}[Y=1|x,\overline{p}(x)]-\text{Pr}[Y=1|x,\underline{p}(x)]\\
\;\\
\underline{L}^{SV}(x)&=\sup_{p\in\Omega_{P|x}}\text{Pr}[Y=1,D=1|x,p]-\inf_{p\in\Omega_{P|x}}\{\text{Pr}[Y=1,D=0|x,p]+\text{Pr}[D=1|x,p]\}\\
&=\text{Pr}[Y=1,D=1|x,\overline{p}(x)]-\text{Pr}[Y=1,D=0|x,\underline{p}(x)]-\text{Pr}[D=1|x,\underline{p}(x)]
\end{aligned}
\end{equation}
The width of the ATE outer set is
\begin{align}\label{widest2}
\overline{\omega}(x)=&\text{Pr}[Y=1,D=0|x,\overline{p}(x)]+\text{Pr}[Y=0,D=1|x,\underline{p}(x)].
\end{align}
The desired results follow directly from \eqref{svw_p} to \eqref{widest2} upon application of Lemma \ref{lemma1}.
\end{proof}\newline

\begin{proof}[Proof of Proposition \ref{sv2}]The proof follows directly from the expression for $\overline{\omega}(x)$ in Proposition \ref{svw} and Lemma \ref{lemma2}.
\end{proof}\\

\begin{proof}[Proof of Proposition \ref{prop3_1}]Without loss of generality, assume that $\varepsilon_2$ is uniformly distributed over $[0,1]$. It is sufficient to show that $L^{SV}(x)=U^{SV}(x)=$ATE$(x)$ for any $x\in\mathcal{X}^0\cap\mathcal{X}^1$. First, consider $L^{SV}(x)$. By definition of $\mathcal{X}^1$, there exists a $x^*$ such that $\nu_1(0,x^*)=\nu_1(1,x)$, implying $x^*\in\textbf{X}_{1+}(x)$ and for $p=P(x,z)$
\begin{align*}
\sup_{x'\in \textbf{X}_{1+}(x)}\text{Pr}[Y=1,D=0|x',p]
=&\sup_{x'\in \textbf{X}_{1+}(x)}\text{Pr}[\nu_1(0,x')>\varepsilon_1,p\leq\varepsilon_2|x',p]\nonumber\\
=&\text{Pr}[\nu_1(0,x^*)>\varepsilon_1,p\leq\varepsilon_2|x',p]\nonumber\\
=&\text{Pr}[\nu_1(1,x)>\varepsilon_1,p\leq\varepsilon_2|x,p]\nonumber\\
=&\text{Pr}[Y_1=1,D=0|x,p].
\end{align*}
Similarly, by definition of $\mathcal{X}^0$ we know that $x^*\in\textbf{X}_{0+}(x)$ and for $p=P(x,z)$
\begin{align*}
p\inf_{x'\in\textbf{X}_{0+}(x)}\text{Pr}[Y=1|x',p,D=1]
=&\inf_{x'\in\textbf{X}_{0+}(x)}\text{Pr}[Y=1,D=1|x',p]\nonumber\\
=&\text{Pr}[\nu_1(1,x^*)>\varepsilon_1,p>\varepsilon_2|x',p]\nonumber\\
=&\text{Pr}[\nu_1(0,x)>\varepsilon_1,p>\varepsilon_2|x,p]\nonumber\\
=&\text{Pr}[Y_0=1,D=1|x,p].
\end{align*}
Then, according to the expression of $L^{SV}(x)$,

\begin{align*}
L^{SV}(x)=&\sup_{p\in\Omega_{P|x}}\left\{\text{Pr}[Y_1=1,D=1|x,p]+\text{Pr}[Y_1=1,D=0|x',p]\right\}\\
&~~~~-\inf_{p\in\Omega_{P|x}}\left\{\text{Pr}[Y_0=1,D=0|x,p]+\text{Pr}[Y_0=1,D=1|x',p]\right\}\\
=&\sup_{p\in\Omega_{P|x}}\left\{\text{Pr}[Y_1=1|x,p]\right\}-\inf_{p\in\Omega_{P|x}}\left\{\text{Pr}[Y_0=1|x,p]\right\}\\
=&\text{Pr}[Y_1=1|x]-\text{Pr}[Y_0=1|x],
\end{align*}
where the last equality comes from the independence of $Z$ to $(Y_1,Y_0)$. Parallel arguments can be applied to show that $U^{SV}(x)=$ATE$(x)$.
\end{proof}\\

\begin{proof}[Proof of Proposition \ref{prop3_2}](i)The proof follows directly from the expression of the SV bounds.

(ii) Degeneracy of $\nu_1(D,X)|D$ indicates that there exists a function $m_1:\{0,1\}\mapsto\mathbb{R}$ such that $\nu_1(d,x)=m_1(d)$ for all $(d,x)\in\{0,1\}\times\Omega_X$. Take ATE$(x)$ to be positive. When $H(x,x')$ is well defined and $\nu_1(D,X)=m_1(D)$, $\textbf{X}_{0+}(x)=\textbf{X}_{1+}(x)=\Omega_X$, and $\textbf{X}_{0-}(x)=\textbf{X}_{1-}(x)=\emptyset$. Since $\varepsilon_2$ is continuously distributed, we can conclude that $\text{Pr}[D=1|x',z']=\text{Pr}[D=1|x,z]$ implies $\nu_2(x,z)=\nu_2(x',z')$.

For $L^{SV}(x)$, first, consider $\sup_{x'\in \textbf{X}_{1+}(x)}\text{Pr}[Y=1,D=0|x',p]$.
Since $\textbf{X}_{1+}(x)$ equals $\Omega_X$ because $\nu_1(D,X)=m_1(D)$, we have $\text{Pr}[D=1|x',z']=p$ for at least $(x',z')=(x,z)$, and thus $\sup_{x'\in \textbf{X}_{1+}(x)}\text{Pr}[Y=1,D=0|x',p]$ is well-defined. It follows that
\begin{align}\label{p37}
\sup_{x'\in \textbf{X}_{1+}(x)}\text{Pr}[Y=1,D=0|x',p]
=&\sup_{x'\in \textbf{X}_{1+}(x)}\text{Pr}[\nu_1(0,x')>\varepsilon_1,\nu_2(x',z')\leq\varepsilon_2|x',p]\nonumber\\
=&\sup_{x'\in \textbf{X}_{1+}(x)}\text{Pr}[m_1(0)>\varepsilon_1,\nu_2(x,z)\leq\varepsilon_2|x',p]\nonumber\\
=&\sup_{x'\in \textbf{X}_{1+}(x)}\text{Pr}[m_1(0)>\varepsilon_1,\nu_2(x,z)\leq\varepsilon_2|x,p]\nonumber\\
=&\text{Pr}[Y=1,D=0|x,p],
\end{align}
where the second equality arises because $\nu_1(0,x')=m_1(0)$ and the third equality is due to the independence of $(X,Z)$ to $(\varepsilon_1,\varepsilon_2)$. Similarly, we can show that
\begin{align}\label{p38}
p\inf_{x'\in\textbf{X}_{0+}(x)}\text{Pr}[Y=1|x',p,D=1]
=&\text{Pr}[Y=1,D=1|x,p].
\end{align}
By virtue of equations \eqref{p37} and \eqref{p38}, and Lemma \ref{lemma1}, $L^{SV}(x)$ can be rewritten as
\begin{align}\label{p39}
L^{SV}(x)&=\sup_{p\in\Omega_{P|x}}\left\{\text{Pr}[Y=1,D=1|x,p]+\text{Pr}[Y=1,D=0|x,p]\right\}\nonumber\\
&~~~~~~~~~~~~-\inf_{p\in\Omega_{P|x}}\left\{\text{Pr}[Y=1,D=0|x,p]+\text{Pr}[Y=1,D=1|x,p]\right\}\nonumber\\
&=\sup_{p\in\Omega_{P|x}}\text{Pr}[Y=1|x,p]-\inf_{p\in\Omega_{P|x}}\text{Pr}[Y=1|x,p]\nonumber\\
&=\text{Pr}[Y=1|x,\overline{p}(x)]-\text{Pr}[Y=1|x,\underline{p}(x)].
\end{align}
For $U^{SV}(x)$, because $\textbf{X}_{0-}(x)$ and $\textbf{X}_{1-}(x)$ are empty, from Lemma \ref{lemma1} we get
\begin{align}\label{p43}
U^{SV}(x)&=\inf_{p\in\Omega_{P|x}}\left\{\text{Pr}[Y=1,D=1|x,p]+(1-p)\right\}-\sup_{p\in\Omega_{P|x}}\text{Pr}[Y=1,D=0|x,p]\nonumber\\
&=\text{Pr}[Y=1,D=1|x,\overline{p}(x)]+(1-\overline{p}(x))-\text{Pr}[Y=1,D=0|x,\underline{p}(x)].
\end{align}
It yields from \eqref{p39} and \eqref{p43} that
\begin{align*}
\omega^{SV}
=&\text{Pr}[Y=0,D=0|x,\overline{p}(x)]+\text{Pr}[Y=1,D=1|x,\underline{p}(x)],
\end{align*}
which is equal to $\overline{\omega}(x)$. The proof for the negative ATE$(x)$ case is completely analogous.
\end{proof}\\

\begin{proof}[Proof of Proposition \ref{coro2}](i) If $Z$ is irrelevant, by definition $IIP(x)=0$ and the SV bounds reduce to the benchmark Manski bounds \citep[Remark 2.1 in][]{shaikh2011partial}. To establish necessity we will show that the events $Z$ is relevant and $IIP(x)=0$ occur simultaneously leads to a contradiction. If $Z$ is relevant, then $IIP(x)=1-\overline{\omega}(x)$. The goal, therefore, is to show that relevant $Z$ leads to $\overline{\omega}(x)<1$ if ATE$(x)>0$. By Lemma \ref{lemma1}
\begin{align}\label{ineq1}
&\text{Pr}[Y=0,D=0|x,\overline{p}(x)]-\text{Pr}[Y=0,D=0|x,\underline{p}(x)]
=-\text{Pr}\left[\varepsilon_1\geq\mu_1(0,x),\underline{p}(x)\leq \varepsilon_2<\overline{p}(x)\right]<0,
\end{align}
where the relevance of $Z$ guarantees that $\underline{p}(x)<\overline{p}(x)$ and the continuity of the joint distribution of $(\varepsilon_1,\varepsilon_2)$ implies that \eqref{ineq1} is strictly negative. The result for the case ATE$(x)<0$ can be verified analogously. Therefore,
$\overline{\omega}(x)<1,$ leading to $IIP(x)>0$.\newline

\noindent(ii) If $Z$ is a perfect predictor of $D$, then there exist a $p^*$ and a $p^{**}$ such that $\text{Pr}(D=1|x,p^*)=0$ and $\text{Pr}(D=1|x,p^{**})=1$, which obviously implies $Z$ is relevant and $IIP(x)=1-\overline{\omega}(x)$. Let $\underline{p}(x)=p^*$ and $\overline{p}(x)=p^{**}$. From the expression for $\overline{\omega}(x)$, perfect prediction of $Z$ leads to $\overline{\omega}(x)=0$ and $IIP(x)=1$ for both ATE$(x)>0$ and ATE$(x)<0$.

Moreover, since $0\leq\omega^{SV}(x)\leq\overline{\omega}(x)$, $\overline{\omega}(x)=0$ implies $\omega^{SV}(x)=0$. The ATE$(x)$ is point identified if $IIP(x)=1$.
\end{proof}

\end{document}